\pdfoutput=1
\documentclass[letterpaper, 10 pt, conference]{ieeeconf}
\IEEEoverridecommandlockouts
\let\labelindent\relax
\usepackage{amsmath,amssymb,amsthm,mathtools}
\usepackage{cite}
\usepackage{enumitem}
\usepackage{tikz}
\usepackage{float}
\usetikzlibrary{positioning,arrows.meta}
\makeatletter\let\NAT@parse\undefined\makeatother
\usepackage[normalem]{ulem}
\usepackage[colorlinks=true,linkcolor=blue,citecolor=blue]{hyperref}
\usepackage[capitalize,nameinlink]{cleveref}

\newtheorem{theorem}{Theorem}
\newtheorem{proposition}{Proposition}
\newtheorem{lemma}{Lemma}
\theoremstyle{definition}
\newtheorem{definition}{Definition}
\newtheorem{assumption}{Assumption}
\newtheorem{problem}{Problem}
\theoremstyle{remark}
\newtheorem{remark}{Remark}
\crefname{problem}{Problem}{Problems}
\crefname{assumption}{Assumption}{Assumptions}

\newcommand{\R}{\mathbb{R}}
\newcommand{\xb}{\bar{x}}
\newcommand{\nb}{n_p+m}
\newcommand{\Ab}{\bar{A}}
\newcommand{\By}{B_y}
\newcommand{\Clim}{C_{\lim}}
\newcommand{\Cz}{C_z}
\newcommand{\Cbar}{\overline{C}}
\newcommand{\Kfull}{K_{\mathrm{full}}}
\newcommand{\Hu}{H_u}
\newcommand{\ycmd}{y_{\mathrm{cmd}}}
\newcommand{\yg}{y_g}
\newcommand{\ag}{\alpha_g}
\newcommand{\zmax}{z_{\max}}
\newcommand{\zmin}{z_{\min}}
\newcommand{\umax}{u_{\max}}
\newcommand{\umin}{u_{\min}}
\newcommand{\Z}{\mathcal{Z}}
\newcommand{\U}{\mathcal{U}}
\newcommand{\W}{\mathcal{W}}
\newcommand{\Sset}{S}
\newcommand{\SE}{\Sset_{E}}
\newcommand{\Oh}{\mathcal{O}}
\newcommand{\Nt}{\widetilde{\mathcal{N}}}
\newcommand{\Creg}{C_{\mathrm{reg}}}
\newcommand{\Dreg}{D_{\mathrm{reg}}}
\newcommand{\Bz}{B_\zeta}
\newcommand{\etat}{\tilde{\eta}}
\newcommand{\Wt}{\widetilde{W}}
\newcommand{\Bd}{B_d}
\newcommand{\Tz}{T_{\xi}}
\newcommand{\Emin}{E_{\min}}
\newcommand{\ThB}{\bar{\Theta}}
\newcommand{\ThH}{\hat{\Theta}}
\newcommand{\nub}{\bar{\nu}}
\newcommand{\eM}{e_{M}}
\newcommand{\ones}{\mathbf{1}}
\newcommand{\T}{^{\top}}
\newcommand{\Pieta}{\Pi_{\eta}}
\newcommand{\Imu}{\mathcal{I}_{\mu}}
\newcommand{\Isig}{\mathcal{I}_{\sigma}}
\newcommand{\Izero}{\mathcal{I}_{0}}
\newcommand{\Kz}{\mathcal{K}_{0}}
\newcommand{\Kp}{\mathcal{K}_{+}}
\newcommand{\arxref}[1]{\eqref{#1}}
\DeclareMathOperator*{\argmin}{arg\,min}

\usepackage{algorithm}
\usepackage{algorithmic}
\usepackage[most]{tcolorbox}
\usepackage{colortbl}
\graphicspath{{figs/}}
\newif\ifcolorboxes
\colorboxestrue
\ifcolorboxes\colorlet{keyrow}{black!8}\else\colorlet{keyrow}{white}\fi
\newcommand{\stmtbox}[2]{\ifcolorboxes\tcolorboxenvironment{#1}{enhanced jigsaw,
breakable, frame hidden, boxrule=0pt, sharp corners, colback=#2,
left=3pt, right=3pt, top=3pt, bottom=3pt, boxsep=0pt,
grow to left by=0pt, grow to right by=0pt,
before skip=8pt, after skip=8pt}\fi}
\stmtbox{definition}{blue!7}
\stmtbox{assumption}{teal!10}
\stmtbox{lemma}{yellow!13}
\stmtbox{proposition}{green!9}
\stmtbox{theorem}{red!8}
\stmtbox{problem}{orange!10}
\stmtbox{remark}{black!5}

\begin{document}

\title{\LARGE \bf Feasibility of Simultaneous Input-Output Constraints for Tracking in a Class of LTI Systems: Part~I}

\author{Junhyeok~Yoon$^{1}$, Heather~Hussain$^{2}$, and Anuradha~M.~Annaswamy$^{1}$\thanks{$^{1}$J. Yoon and A. M. Annaswamy are with the Massachusetts Institute of Technology, Cambridge, MA 02139, USA (e-mail: yjunh01@mit.edu; aanna@mit.edu).}\thanks{$^{2}$H. Hussain is with The Boeing Company, Engineering \& Technology Innovation, USA (e-mail: heather.s.hussain@boeing.com).}\thanks{This work was supported by the Boeing Strategic University Initiative.}}

\maketitle

\begin{abstract}
This paper addresses simultaneous input and output constraint satisfaction for a class of multi-input linear time-invariant systems with state feedback and integral action. A control barrier function (CBF)-based governor modifies the reference command while preserving the nominal feedback controller. Necessary and sufficient conditions are derived under which the governor is feasible at every state of a prescribed operating set and ensures simultaneous input and output constraint satisfaction, forward invariance, and bounded closed-loop solutions. The paper also shows that these conditions need not be met for certain choices of the CBF governor parameters due to conflicts between input and output constraints. A systematic design procedure is proposed that first searches for feasible free parameters and, if necessary, proposes a relaxation in the input constraint that restores feasibility. A companion paper provides numerical examples and counterexamples illustrating these results.
\end{abstract}

\begin{keywords}
Constrained control, control barrier functions, feasibility conditions, safety, set invariance.
\end{keywords}

\section{Introduction}\label{sec:intro}
Control barrier functions (CBFs) provide a systematic framework for enforcing safety constraints in dynamical systems by defining a safe set through a barrier function $h(x)$ and imposing conditions on its evolution to ensure forward invariance \cite{AmesCBF19,XiaoBelta22}. A vast literature has developed along this vein, extending CBF methods to a broad range of safety-critical control problems. Safety concerns are often modeled as constraints on key outputs of the dynamic system.

Simultaneous satisfaction of both input and output constraints has received comparatively less attention \cite{ChenJankovicSantilloAmes21,AgrawalPanagou21,ChoiLeeSreenathTomlinHerbert21,XiaoBeltaCassandras22,IsalyGhanbarpourSanfeliceDixon24,BreedenPanagou23,MousaviTanAmes25,MousaviAmes26,
lavretsky2025servo,FisherAnnaswamy25}. The authors of \cite{AgrawalPanagou21,BreedenPanagou23} certify forward invariance of an inner subset of the safe set, obtained by bounding the worst-case effect of the input limits, while the authors of \cite{ChenJankovicSantilloAmes21} obtain such a subset from a backup controller, and the authors of \cite{IsalyGhanbarpourSanfeliceDixon24} verify the feasibility of multiple CBFs by a sum-of-squares search. All of these methods, however, provide only sufficient conditions and can be computationally demanding.
In \cite{ChoiLeeSreenathTomlinHerbert21} the CBF is constructed from a Hamilton--Jacobi reachability value function, which recovers the maximal safe set under input limits but suffers from the curse of dimensionality.
The authors of \cite{XiaoBeltaCassandras22} introduce an additional CBF that steers the state away from conflicts between the output and input constraints,
but it is unclear whether this additional CBF can itself introduce barriers to constraint satisfaction.
In \cite{MousaviTanAmes25,MousaviAmes26}
it is shown that the CBF constraints admit a feasible input at each state of a set, but that set is not shown to be forward invariant. The paper \cite{FisherAnnaswamy25} handles both constraints, but the class of dynamic systems is limited to a chain of integrators. The authors of \cite{lavretsky2025servo} consider a general LTI system, include integral action, consider both input and output constraints, and propose a specific augmentation-based control architecture for stability and constraint satisfaction. The feasibility of the latter, however, is not established. Distinct from these papers, input and output constraints have been addressed in a combined manner using a Reference Governor approach which consists of modifying the reference input into the system so as to have the system variables belong to an admissible set where all constraints are met \cite{GaroneDiCairanoKolmanovsky17}; determining this admissible set may be computationally burdensome.

In this paper, we derive necessary and sufficient conditions for the feasibility of simultaneously satisfying input and output (I-O) constraints while tracking a command signal using a CBF-based governor for a general class of multi-input LTI systems with state feedback and integral action. Under these conditions, it is shown that the CBF Governor generates a modified command signal that satisfies I-O constraints with all closed-loop solutions remaining bounded in a given set that is forward invariant. The main contribution of the paper is the introduction of a new CBF-based governor with (1) exact characterization of necessary and sufficient conditions that ensure feasibility of I-O constraints with closed-loop boundedness for command tracking, (2) a necessary condition for forward invariance, (3) a design process that seeks feasibility for a given set of input constraints and relaxation thereof only when necessary, and (4) modularity, preserving a closed-loop design intact and modifying only the command.

The problem is posed in \cref{sec:setting}. The I-O CBF Governor is described in \cref{sec:governor}. Necessary and sufficient conditions for the I-O constraints to be met with closed-loop boundedness are established in \cref{sec:farkas}. A design procedure for ensuring that the conditions are always met is described in \cref{sec:design}. We refer the reader to the companion paper \cite{YoonPartII} which provides a number of examples and simulation studies that support the main results of this paper. All notation is listed in Table~\ref{tab:notation} in order of first appearance.

\begin{table*}[t]
\caption{Notation.}
\label{tab:notation}
\centering\footnotesize\setlength{\tabcolsep}{1.6pt}
\begin{tabular}{@{}llll@{}}
\hline
Symbol & Meaning & Symbol & Meaning\\
\hline
\rowcolor{keyrow}$\ker Q=\Oh$ & \textbf{finiteness condition} (\S\ref{sec:finite}) &
$\min_k\nu_k(\Sset)\ge0$ & \textbf{SIOCF condition} on $\Sset$ \eqref{eq:siocfcond}\\
$A_p,B_p,\Clim$ & plant matrices &
$x_p\in\R^{n_p}$ & plant state\\
$u,z\in\R^m$ & plant input, output to be constrained &
$y:=\Creg x_p{+}\Dreg u$ & tracked output\\
$e_I\in\R^m$ & integrator state &
$\ycmd,\yg\in\R^m$ & external command, modified command\\
$\Z$, $\U$ & output box, input box &
$\zmin,\zmax,\umin,\umax$ & lower and upper limits of $z$ and $u$\\
$K_P,K_I$ & PI gains, $\Kfull:=[K_P\ \ K_I]$ &
$u=-\Kfull\xb$ & PI control law \eqref{eq:pi}\\
$\xb\in\R^{n_p+m}$ & augmented state $[x_p\T\ e_I\T]\T$ &
$\Ab,\By$ & augmented loop matrices\\
$\Hu:=\Clim B_p$ & high-frequency gain &
$h_1^{\pm}$, $h_u^{\pm}$ & output, input first-order CBFs\\
$\Cz:=[\Clim\ \ 0_{m\times m}]$ & $z=\Cz\xb$ &
$\psi_1^{\pm},\psi_2^{\pm},\psi_u^{\pm}$ & higher-order CBFs\\
$\alpha_1,\alpha_2,\alpha_u>0$ & CBF gains &
$\ag>0$ & $\yg$ bound\\
$F(\xb)$ & feasible set of \eqref{eq:qp} &
$A_\zeta\in\R^{6m\times m}$ & constraint matrix (constant)\\
$b(\xb)=\Bz\xb+b_0$ & constraint offset &
$\Omega$ & operating set \eqref{eq:omega}\\
$\Sset$ & a nonempty closed subset of $\Omega$ (\cref{def:siocf}) &
$\alpha^{1}$ & design parameter $[\,\alpha_1\ \alpha_2\ \alpha_u\ \ag\,]\T$\\
$\Lambda$, $\lambda^1,\dots,\lambda^N$ & Farkas cone of $A_\zeta$, its extreme rays &
$\nu_k(\Sset)$ & ray values over $\Sset$\\
$Q:=[\Cz\T\ (\Cz\Ab)\T\ \Kfull\T]\T$ & coefficient matrix of $\Omega$ \eqref{eq:omegamat} &
$\Oh$ & unobservable subspace of $(\Ab,\Cbar)$\\
$\Cbar:=[\Cz\T\ \Kfull\T]\T$ & $\Cbar\xb=[z\T\ {-u\T}]\T$ (\S\ref{sec:finite}) &
$V\in\R^{n_p\times d}$ & $\operatorname{span}V=\ker[\Clim\T\ (\Clim A_p)\T]\T$, $\operatorname{rank}V=d$\\
$\etat$, $\eta$ & internal, deviation coordinates &
$E$, $\Emin$ & an RPI set, the minimal one \eqref{eq:emin}\\
$A_\eta$, $\Bd$ & $\eta$ dynamics, $\dot\eta=A_\eta\eta+\Bd\dot z$ \eqref{eq:etadyn} &
$\W$ & set containing $\dot z$ on $\Omega$ \eqref{eq:capset}\\
$w_z$ & width of $\Z$: $\zmax-\zmin$ &
$\SE$ & operating set induced by $E$ (\cref{thm:mainSE})\\
$\alpha^{2}$ & design parameter $[\,\alpha_1\ \alpha_2\ \alpha_u\ \ag\ \varrho\,]\T$ &
$\varrho\ge1$ & relaxation gain in the input constraint\\
$\ag^{\star},\varrho^{\star}$ & thresholds of \cref{thm:fix4} \eqref{eq:thresholds} &
$\sigma_k,\mu_k$ & coefficients of $\ag$ and $\alpha_u\varrho$ in $\lambda^{k\top}b$ \eqref{eq:sigmu}\\
$\varrho_1$ & $\varrho_1:=1$ (\S\ref{sec:fix12}) &
$X_0$, $\alpha_1^{\min}$ & set of initial values, lower bound on $\alpha_1$ \eqref{eq:a1min}\\
$\ThH(\alpha_1^{\min},\varrho)$ & bound on $\ThB$ for $\alpha_1\ge\alpha_1^{\min}$ &
$\ThB(\alpha_1,\varrho)$ & supremum of $\min_k\nu_k$ over the second class \eqref{eq:ThB}\\
$\bar\alpha_1$, $\alpha_1^{\star}$ & upper limit and choice of $\alpha_1$ \eqref{eq:a1bar}, \eqref{eq:a1star} &
$\underline\alpha_2$, $\underline\alpha_g$ & lower bounds of Step~2 \eqref{eq:a2min}, \eqref{eq:agmin}\\
\hline
\end{tabular}
\end{table*}

\section{Problem Setting}\label{sec:setting}

We consider a class of LTI plants with integral action:
\begin{align}
\dot x_p&=A_px_p+B_pu,\notag\\
y&=\Creg x_p+\Dreg u,\label{eq:plant}\\
\dot e_I&=y-\yg ,\notag
\end{align}
where $x_p\in\R^{n_p}$ is the plant state, $u\in\R^m$ is the control input, $y\in\R^m$ is the output required to track a desired signal $\ycmd\in\R^m$, and $e_I\in\R^m$ is the integrator state.
$\yg$ in \eqref{eq:plant} represents a modified command to be chosen so that constraints on the input $u$ and relevant outputs of the system $z$ are satisfied at all times.
The latter is specified as
\begin{subequations}\label{eq:req}
\begin{alignat}{2}
u(t)&\in\U:=\{u\in\R^m:\umin\le u\le\umax\}&\ \ &\forall t\ge t_0,\label{eq:requ}\\
z(t)&\in\Z:=\{z\in\R^m:\zmin\le z\le\zmax\}&\ \ &\forall t\ge t_0,\label{eq:reqz}
\end{alignat}
\end{subequations}
where
\begin{equation}\label{eq:zout}
z=\Clim x_p ,
\end{equation}
with $\Clim\in\R^{m\times n_p}$.
In \eqref{eq:req}, $\zmin,\zmax,\umin,\umax\in\R^m$, and the inequalities are imposed component-wise, where $\umin<0<\umax$ and $\zmin<0<\zmax$.
The overall objective of this paper is that all states remain bounded, $y$ tracks $\ycmd$ as closely as possible, and that constraints on $u$ and $z$ be satisfied at all times.
In order to achieve the first objective, we propose a PI controller of the form
\begin{equation}\label{eq:pi}
u=-K_Px_p-K_Ie_I ,
\end{equation}
where $K_P\in\R^{m\times n_p}$ and $K_I\in\R^{m\times m}$.
In order to achieve the latter two objectives, we propose a CBF-based governor, which is the focus of this paper.

We rewrite the closed-loop system using \eqref{eq:plant} and \eqref{eq:pi} as
\begin{gather}
\dot\xb=\Ab\xb+\By\,\yg ,\qquad
\xb:=\begin{bmatrix}x_p\\ e_I\end{bmatrix}\in\R^{n_p+m},\label{eq:loop}\\
\Ab:=\begin{bmatrix}A_p{-}B_pK_P & {-}B_pK_I\\ \Creg{-}\Dreg K_P & {-}\Dreg K_I\end{bmatrix},
\quad
\By:=\begin{bmatrix}0_{n_p\times m}\\ -I_m\end{bmatrix}.\notag
\end{gather}
We formally state the focus of the paper in \cref{prob:setting}.

\begin{problem}\label{prob:setting}
Given the closed-loop system \eqref{eq:loop} and the constraint requirement \eqref{eq:req}, find a modified command $\yg$ such that, for all bounded piecewise continuous $\ycmd$, every solution of \eqref{eq:loop} remains bounded and \eqref{eq:req} is satisfied.
\end{problem}

The following assumptions are made regarding the underlying LTI system, where $\Hu:=\Clim B_p$.

\begin{assumption}\label{ass:standing}\leavevmode
\begin{enumerate}[label=(\textnormal{A\arabic*}),leftmargin=2.6em,topsep=2pt]
\item\label{ass:A1} $\Hu$ is nonsingular;
\item\label{ass:A2} the invariant zeros of $(A_p,\!B_p,\!\Clim,\!0_{m\times m})$ lie in $\mathbb C^{-}$;
\item\label{ass:A3} $\Ab$ is Hurwitz.
\end{enumerate}
\end{assumption}

Assumptions \textnormal{\ref{ass:A1}} and \textnormal{\ref{ass:A2}} pertain only to the output $z$ to be constrained and impose no restriction on the tracked output $y$.
Assumption \textnormal{\ref{ass:A3}} can be easily satisfied by the choice of $(K_P,K_I)$ whenever \eqref{eq:plant} is stabilizable.

\section{I-O CBF Governor}\label{sec:governor}

The I-O CBF Governor we propose modifies the desired command $\ycmd$ as $\yg$ so as to ensure that the I-O constraints are met in \eqref{eq:req} and so that $\yg$ remains as close as possible to $\ycmd$ (see \autoref{fig:blockdiagram}).
To achieve these objectives, we begin with two CBF pairs $h_1^{\pm}$, $h_u^{\pm}$ as
\begin{align*}
h_1^{+}(\xb)&:=\zmax-\Cz\xb, & h_1^{-}(\xb)&:=\Cz\xb-\zmin,\\
h_u^{+}(\xb)&:=\umax+\Kfull\xb, & h_u^{-}(\xb)&:=-\Kfull\xb-\umin ,
\end{align*}
where $\Cz:=[\Clim\ \ 0_{m\times m}]$ and $\Kfull:=[K_P\ \ K_I]$.
Due to the presence of two integrators in \eqref{eq:plant}, two high-order CBFs (HOCBFs) are constructed pertaining to the state, and due to the integral controller in \eqref{eq:pi}, a first-order CBF is constructed pertaining to the input.
These are defined as
\begin{align*}
\psi_1^{\pm}&:=\dot h_1^{\pm}+\alpha_1h_1^{\pm}, & \psi_2^{\pm}&:=\dot\psi_1^{\pm}+\alpha_2\psi_1^{\pm},\\
\psi_u^{\pm}&:=\dot h_u^{\pm}+\alpha_uh_u^{\pm},
\end{align*}
where $\alpha_1,\alpha_2,\alpha_u>0$.
With these definitions, the I-O CBF Governor will be constructed using the inequalities $\psi_2^{\pm}\ge0$ and $\psi_u^{\pm}\ge0$ for all $t\ge t_0$, much like the approaches adopted in \cite{XiaoBelta22,FisherAnnaswamy25}, but with an additional constraint on $\yg$ to provide a feasible solution.
\begin{figure}[t]
\centering
\begin{tikzpicture}[
  font=\fontsize{8}{9.6}\selectfont,
  block/.style={draw, rounded corners=1.5pt, line width=0.5pt, minimum height=0.9cm, align=center, inner sep=2pt},
  vec/.style={double, double distance=1.1pt, line width=0.4pt, -{Implies}},
  lbl/.style={font=\fontsize{9}{10.8}\selectfont, inner xsep=1.5pt, inner ysep=4pt}]
\node[block, minimum width=1.98cm, fill=cyan!10] (gov) at (1.790,0) {\textsc{I-O CBF}\\\textsc{Governor}~\eqref{eq:qpproj}};
\node[block, minimum width=2.22cm] (pi) at (4.610,0) {\textsc{Baseline PI}\\\textsc{Controller}~\eqref{eq:pi}};
\node[block, minimum width=1.9cm] (plant) at (7.140,0) {\textsc{Plant}\\\textsc{Dynamics}~\eqref{eq:plant}};
\draw[gray, line width=0.35pt, densely dashed, rounded corners=0.5mm]
  ([xshift=-1.5mm, yshift=-2mm]pi.south west) rectangle ([xshift=1.5mm, yshift=9mm]plant.north east);
\draw[vec] (0,0) -- node[lbl, above] {$y_{\mathrm{cmd}}$} (gov.west);
\draw[vec] (gov.east) -- node[lbl, above, pos=0.35] {$y_g$} (pi.west);
\draw[vec] (pi.east) -- node[lbl, above] {$u$} (plant.west);
\draw[vec] (plant.east) -- node[lbl, above, pos=0.84] {$y$} (8.53,0);
\draw[vec] (plant.north) -- ++(0,0.35) -| node[lbl, above, pos=0.25] {$x_p$} (pi.north);
\draw[vec] (5.795,-0.65) -- (5.795,-1.0) -| node[lbl, left=2pt, pos=0.75] {$\bar x$} (gov.south);
\end{tikzpicture}
\caption{I-O CBF Governor architecture. The I-O CBF Governor
modifies the desired command $y_{\mathrm{cmd}}$ to produce $y_g$. The augmented
closed-loop state $\bar{x}$ is fed back to the governor.}
\label{fig:blockdiagram}
\end{figure}
Notation-wise, $\psi^{\pm}\ge0$ represents a combined set of inequalities $\psi^{+}\ge0$ and $\psi^{-}\ge0$.
The complete I-O CBF Governor is proposed as
\begin{subequations}\label{eq:qp}
\begin{align}
&\yg(\xb,\ycmd):=\argmin_{\zeta\in\R^m}\ \lVert\zeta-\ycmd\rVert^{2}\label{eq:qp-obj}\\
&\text{s.t.}\ \ \psi_2^{\pm}(\xb,\zeta)\ge0,\ \psi_u^{\pm}(\xb,\zeta)\ge0,\ \lVert\zeta\rVert_\infty\le\ag ,\label{eq:qp-con}
\end{align}
\end{subequations}
where $\ag>0$, $\lVert\cdot\rVert$ denotes the Euclidean norm, and matrix norms are induced.
We rewrite the inequalities \eqref{eq:qp-con} as $\zeta\in F(\xb)$, where the pointwise feasible set \cite{BoydV04} is
\begin{equation}\label{eq:qpF}
F(\xb):=\big\{\zeta\in\R^m:\ A_\zeta\,\zeta\le b(\xb)\big\} ,
\end{equation}
with $b(\xb):=\Bz\xb+b_0$ and the constant matrices
\begin{gather}
A_\zeta:=\begin{bmatrix}\Hu K_I\\ -\Hu K_I\\ K_I\\ -K_I\\
I_m\\ -I_m\end{bmatrix},
\qquad
b_0:=\begin{bmatrix}
\alpha_1\alpha_2\,\zmax\\ -\alpha_1\alpha_2\,\zmin\\
\alpha_u\,\umax\\ -\alpha_u\,\umin\\
\ag\ones_m\\ \ag\ones_m
\end{bmatrix},\notag\displaybreak[0]\\[6pt]
\Bz:=\begin{bmatrix}
-\big(\Cz\Ab^{2}+(\alpha_1{+}\alpha_2)\Cz\Ab+\alpha_1\alpha_2\Cz\big)\\
+\big(\Cz\Ab^{2}+(\alpha_1{+}\alpha_2)\Cz\Ab+\alpha_1\alpha_2\Cz\big)\\
\Kfull(\Ab+\alpha_uI)\\ -\Kfull(\Ab+\alpha_uI)\\ 0\\ 0
\end{bmatrix}.\label{eq:qpdata}
\end{gather}
We denote $F(\xb(t))\neq\varnothing$, $\forall t\ge t_0$, to be the \emph{feasibility condition}.
Using \eqref{eq:qpF}, we rewrite the I-O CBF Governor in \eqref{eq:qp-obj}--\eqref{eq:qp-con} as
\begin{equation}\label{eq:qpproj}
\yg(\xb,\ycmd)=\argmin_{\zeta\in F(\xb)}\ \lVert\zeta-\ycmd\rVert^{2}.
\end{equation}
It is clear that if the feasibility condition is satisfied, then all I-O constraints \eqref{eq:req} are satisfied.
We note that \eqref{eq:qpproj} is a strictly convex quadratic program with $F(\xb)$ defining a compact convex polytope.
Feasibility of \eqref{eq:qpproj} also implies that the solution $\yg$ is continuous and piecewise affine, hence Lipschitz in $\xb$ and $\ycmd$, and unique \cite[Thm.~4]{Bemporad02}.

\subsection{Restatement of Problem~\ref{prob:setting}}\label{sec:goals}

Before we restate \cref{prob:setting}, we introduce two sets, $\Omega$ and $\Sset$, and a definition.
The set $\Omega$ is defined as
\begin{equation}\label{eq:omega}
\Omega:=\{\xb\in\R^{n_p+m}:\ h_1^{\pm}(\xb)\ge0,\ \psi_1^{\pm}(\xb)\ge0,\ h_u^{\pm}(\xb)\ge0\},
\end{equation}
which will be used to establish forward invariance of the closed-loop system \eqref{eq:loop}, i.e.\ if $\xb(t_0)\in\Omega$, then $\xb(t)\in\Omega$ $\forall t\ge t_0$.
The design parameter $\alpha^{1}\in\R_{++}^{4}$ of the I-O CBF Governor, where $\R_{++}$ denotes the positive reals, is given by
\begin{equation}\label{eq:alpha1}
\alpha^{1}:=[\,\alpha_1\ \ \alpha_2\ \ \alpha_u\ \ \ag\,]\T .
\end{equation}
We now introduce a definition, SIOCF, that formalizes the property of feasibility:

\begin{definition}\label{def:siocf}
Let $\Sset\subseteq\Omega$ be nonempty and closed.
The I-O CBF Governor \eqref{eq:qpproj} \textbf{\emph{satisfies simultaneous input and output constraint feasibility (SIOCF)}} on $\Sset$ if, for all bounded piecewise continuous $\ycmd$ and $\forall\xb(t_0)\in\Sset$, the solutions of \eqref{eq:loop} satisfy the following:
\begin{enumerate}[label=(P\arabic*),leftmargin=2.6em]
\item\label{def:P1} existence and uniqueness of $\yg$ on $[t_0,\infty)$;
\item\label{def:P2} $\xb(t)\in\Sset$ $\ \forall t\ge t_0$;
\item\label{def:P3} $z(t)\in\Z$ and $u(t)\in\U$ $\ \forall t\ge t_0$;
\item\label{def:P4} $\lVert\yg(t)\rVert_\infty\le\ag$ $\ \forall t\ge t_0$.
\end{enumerate}
\end{definition}

We now restate \cref{prob:setting}:

\begin{problem}\label{prob:cert}
Given the closed-loop system \eqref{eq:loop}, the constraint requirement \eqref{eq:req}, and the I-O CBF Governor \eqref{eq:qpproj}, find a nonempty $\Sset\subseteq\Omega$ such that the I-O CBF Governor satisfies SIOCF on $\Sset$ for a given $\alpha^{1}$.
\end{problem}

\section{Solution to Problem~\ref{prob:cert}}\label{sec:farkas}

\subsection{Simplification of the Feasibility Condition}\label{sec:farkasred}

The first step in providing a solution to \cref{prob:cert} is a simplification of the feasibility condition.
The following definition is useful.
Define a polyhedral cone $\Lambda$ as
\begin{equation}\label{eq:cone}
\Lambda:=\{\lambda\in\R^{6m}:\ A_\zeta\T\lambda=0,\ \lambda\ge0\} .
\end{equation}
By the Minkowski--Weyl theorem \cite{Schrijver86}, $\Lambda$ can be rewritten using a finite set of \emph{extreme rays} $\lambda^k$, $k=1,\dots,N$, as
\begin{equation}\label{eq:rays}
\Lambda=\Big\{\textstyle\sum_{k=1}^{N}c_k\lambda^k:\ c_k\ge0,\ \lVert\lambda^k\rVert_2=1\Big\} .
\end{equation}
The set $\{\lambda^1,\dots,\lambda^{N}\}$ is unique.
Using these extreme rays, we provide a simpler necessary and sufficient condition of feasibility of \eqref{eq:qpproj} in \cref{prop:cert}.

\begin{proposition}\label{prop:cert}
For nonempty $\Sset\subseteq\R^{\nb}$ let
\[
\nu_k(\Sset):=\inf_{\xb\in\Sset}\lambda^{k\top}b(\xb),\qquad
k=1,\dots,N .
\]
Then $F(\xb)\neq\varnothing$ $\forall\xb\in\Sset$ if and only if
\begin{equation}\label{eq:siocfcond}
\min_k\nu_k(\Sset)\ge0 .
\end{equation}
If in addition $\Hu$ satisfies \textnormal{\ref{ass:A1}} and $\Ab$ in \eqref{eq:loop} satisfies \textnormal{\ref{ass:A3}}, then
\begin{equation}\label{eq:raycount}
9m\ \le\ N\ \le\ 3m+2\binom{3m}{m+1} .
\end{equation}
\end{proposition}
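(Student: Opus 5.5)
The first assertion is a direct application of Farkas' lemma (in the form of the theorem of alternatives for linear inequalities) to the system $A_\zeta\zeta\le b(\xb)$ at a fixed $\xb$. Exactly one of the following holds: either there is $\zeta$ with $A_\zeta\zeta\le b(\xb)$, or there is $\lambda\ge0$ with $A_\zeta\T\lambda=0$ and $\lambda\T b(\xb)<0$. Hence $F(\xb)\neq\varnothing$ iff $\lambda\T b(\xb)\ge0$ for every $\lambda\in\Lambda$.

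By \eqref{eq:rays}, every $\lambda\in\Lambda$ is a nonnegative combination of the $\lambda^k$. So this is equivalent to $\lambda^{k\top}b(\xb)\ge0$ for $k=1,\dots,N$. Quantifying over $\xb\in\Sset$ gives $\inf_{\xb\in\Sset}\lambda^{k\top}b(\xb)\ge0$ for every $k$, i.e.\ \eqref{eq:siocfcond}. The case $\nu_k=-\infty$ is covered automatically, since then some $\xb\in\Sset$ has $\lambda^{k\top}b(\xb)<0$.

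For the ray count I first show that \ref{ass:A3} forces $K_I$ to be nonsingular. If $K_Iv=0$ with $v\neq0$, then $\Ab[0\T\ v\T]\T=0$, contradicting that $\Ab$ is Hurwitz. Together with \ref{ass:A1}, each of the three blocks $\Hu K_I$, $K_I$, $I_m$ of $A_\zeta$ is nonsingular. In particular $A_\zeta\T$ has rank $m$, and $\Lambda$ is a pointed cone (it lies in the orthant).

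Next I use the standard description of extreme rays of $\{\lambda\ge0:A_\zeta\T\lambda=0\}$. Their supports are exactly the circuits, i.e.\ the minimal linearly dependent sets, among the rows of $A_\zeta$ whose unique (up to scale) dependency can be made nonnegative. Each circuit therefore has size at most $m+1$.

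\emph{Upper bound.} The $3m$ antipodal pairs $\{a_i,-a_i\}$ are circuits and give $3m$ rays. Any other circuit cannot contain an antipodal pair, by minimality. So it selects at most one sign from each of the $3m$ pairs $\pm a_i$, and its underlying index set $J\subseteq\{1,\dots,3m\}$ has $|J|\le m+1$. For fixed $J$ the dependency is unique up to scale, which admits at most $2$ sign patterns (overall sign $\pm$) producing a nonnegative vector. The underlying sets $J$ form an antichain, because a circuit cannot properly contain another. By the LYM inequality, an antichain of subsets of size $\le m+1$ of a $3m$-set has at most $\binom{3m}{m+1}$ members, since $m+1\le\lceil 3m/2\rceil$ for all $m\ge1$. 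This gives $N\le 3m+2\binom{3m}{m+1}$.

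\emph{Lower bound.} This is the step I expect to be hardest. The $3m$ antipodal rays are immediate. For the remaining $6m$ I would try to exhibit, for each ordered pair of distinct blocks $(P,R)$ and each column index, a circuit supported on one row of $P$ together with suitably signed rows of $R$. Nonsingularity of $R$ guarantees that $-r$ lies in the cone of $\pm$rows of $R$ for any row $r$ of $P$, so such a circuit exists.

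The difficulty is showing these rays are pairwise distinct, because a single ray of support $m+1$ may cover several rows. My first attempt would be a counting argument restricted to circuits supported in the union of two blocks. Each such pair of blocks $\{\pm P,\pm R\}$ contains the $2m$ antipodal rays plus at least $2m$ mixed rays: a pointed cone of dimension $m$ in the $2m$-row subproblem restricted to one sign choice needs $m$ rays, and the overall sign flip doubles this. Summing over the three block pairs then yields $3m+6m=9m$. The point to verify carefully is that mixed rays for different block pairs are distinct, which holds because their supports meet different blocks.
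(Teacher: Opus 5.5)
Your Farkas argument and your upper bound are the paper's proof. Supports of extreme rays are the sets with one-dimensional $\ker A_S^{\top}$; only the $3m$ antipodal pairs contain a pair; the other index sets form an antichain of sets of size at most $m+1$; and the paper's permutation count is exactly the LYM inequality.

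The gap is in the lower bound. Your claim that ``the $2m$-row subproblem restricted to one sign choice'' is a pointed cone of dimension $m$ is asserted, not proved, and it fails for an arbitrary choice. Fix signs $\varsigma$ on the $2m$ rows of blocks $P,R$, let $S_\varsigma$ be the resulting signed rows, and set $\Lambda_\varsigma:=\{\lambda\in\Lambda:\operatorname{supp}\lambda\subseteq S_\varsigma\}$. This face lies in the $m$-dimensional space $\ker A_{S_\varsigma}^{\top}$, but it need not be $m$-dimensional. For example, take $(P,R)=(\Hu K_I,K_I)$ with $\Hu=I_m$ and all signs $+$: then $K_I^{\top}(\lambda_1+\lambda_2)=0$ with $\lambda\ge0$ forces $\lambda=0$.

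You have to choose the signs. Take $p\in\R^m$ such that all entries of $p$ and of $r:=-(R^{\top})^{-1}P^{\top}p$ are nonzero; this is possible since it only avoids finitely many hyperplanes. Let $\varsigma$ be the sign pattern of $(p,r)$. Then:
\begin{itemize}
\item $(\lvert p\rvert;\lvert r\rvert)$ is a strictly positive point of $\Lambda_\varsigma$, so $\Lambda_\varsigma$ is a pointed $m$-dimensional face of $\Lambda$ and has at least $m$ extreme rays.
\item These rays are extreme in $\Lambda$, contain no antipodal pair, and meet both $P$ and $R$, because both blocks are nonsingular.
\item $\Lambda_{-\varsigma}$ is supported on the disjoint set of oppositely signed rows and supplies $m$ more.
\end{itemize}
With this repair your count $3m+3\cdot 2m$ is valid.

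The paper does not count; it writes the $6m$ rays down. For each block pair $(\Hu K_I,K_I)$, $(\Hu K_I,I_m)$, $(K_I,I_m)$ and each $i$, it expands row $i$ of the second block, with either sign, in the basis formed by the rows of the first. An example is $(p^i_+;p^i_-;0;e_i;0;0)$ with $p^i=(\Hu^{\top})^{-1}e_i$, together with its mirror image. Extremality is uniqueness of a basis expansion, and distinctness is read off from the single entry $(i,\pm)$ in the second block.

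Your first construction is this one with the wrong indexing. Using both orderings $(P,R)$ and $(R,P)$ with a single sign can genuinely collide. With $\Hu=-I_m$, expanding $+(\Hu K_I)_i$ in $K_I$ and expanding $+(K_I)_i$ in $\Hu K_I$ give the same ray, supported on row $i$ of the $z{+}$ and $u{+}$ blocks. One ordering with both signs never collides.

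Your repaired face-dimension argument is a legitimate nonconstructive alternative that avoids computing any ray. The explicit version is shorter, though, and it produces the specific rays the paper uses later, such as $\chi^{\times,i}$ in \cref{rem:rays}.
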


We shall denote \eqref{eq:siocfcond} as the \emph{SIOCF condition} on $\Sset$.
Proofs of the results are given in the Appendix.
\subsection{Necessary Condition for SIOCF on \texorpdfstring{$\Omega$}{Ω}}\label{sec:finite}

The SIOCF condition~\eqref{eq:siocfcond} on $\Omega$ requires every $\nu_k(\Omega)$ to be finite. We therefore introduce a condition equivalent to this finiteness, which is thus necessary for SIOCF on $\Omega$.
For this purpose, we substitute \eqref{eq:loop} into $\psi_1^{\pm}$ to obtain
\begin{equation}\label{eq:psi1}
\psi_1^{\pm}(\xb)=\alpha_1h_1^{\pm}(\xb)\mp\Cz\Ab\xb ,
\end{equation}
and rewrite $\Omega$ as
\begin{equation}\label{eq:omegamat}
\begin{aligned}
\Omega=\{\xb\in\R^{\nb}:\ &T_QQ\,\xb+c_0^{-}\ge0,\\
&-T_QQ\,\xb+c_0^{+}\ge0\} ,
\end{aligned}
\end{equation}
where
\begin{gather*}
Q:=\begin{bmatrix}\Cz\\ \Cz\Ab\\ \Kfull\end{bmatrix},\qquad
T_Q:=\begin{bmatrix}I_m&0&0\\ \alpha_1I_m&I_m&0\\ 0&0&-I_m\end{bmatrix},\\
c_0^{-}:=-[\zmin\T\,\alpha_1\zmin\T\,\umin\T]\T,\ c_0^{+}:=[\zmax\T\,\alpha_1\zmax\T\,\umax\T]\T .
\end{gather*}
It is clear from \eqref{eq:omegamat} that $\Omega$ is unbounded whenever $\dim\ker Q\ge1$, since $T_Q$ is nonsingular and therefore $\ker(T_QQ)=\ker Q$.

We also introduce the set $\Oh$ defined as
\begin{equation}\label{eq:unobs}
\Oh:=\bigcap_{k\ge0}\ker(\Cbar\Ab^{k}) ,
\end{equation}
where $\Cbar:=[\Cz\T\ \ \Kfull\T]\T$.
Since $\Cbar\xb=[z\T\ \ {-u\T}]\T$, $\Oh$ corresponds to the unobservable subspace of $(\Ab,\Cbar)$.
An important observation is that any $v\in\ker Q$ cannot be sensed by the barrier functions in \eqref{eq:omega}. Therefore, a condition that $\ker Q=\Oh$ allows such states $v$ to be unobservable in $z$ and $u$.
We shall denote this to be the \emph{finiteness condition}.
\Cref{prop:escape} formalizes this property.

\begin{proposition}\label{prop:escape}
Assume $\ker Q\supsetneq\Oh$.
$\forall M>0$ and $\forall\bar\tau>0$, $\exists\,(t^\ast,v,r^\ast)\in(0,\bar\tau]\times\ker Q\times(0,\infty)$ such that $\forall r>r^\ast$ and for all piecewise continuous $\zeta$ with $\lVert\zeta(t)\rVert_\infty\le M$, the solution of $\dot\xb=\Ab\xb+\By\,\zeta(t)$ satisfies
\[
\xb(t_0)=r\,v\in\Omega,\qquad
\xb(t_0+t^\ast)\notin\Omega .
\]
\end{proposition}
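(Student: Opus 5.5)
The idea is to pick $v\in\ker Q\setminus\Oh$ and scale it by a large $r$. The free response $r\,e^{\Ab t}v$ leaves the constraint slab of $\Omega$ at a fixed small time with magnitude proportional to $r$. The forced response due to any $\zeta$ with $\lVert\zeta\rVert_\infty\le M$ stays bounded independently of $r$, so for $r$ large it cannot compensate.

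\emph{Choice of $v$ and $t^\ast$.} Since $\ker Q\supsetneq\Oh$, I pick $v\in\ker Q$ with $v\notin\Oh$. Then $\Cz v=0$, $\Cz\Ab v=0$ and $\Kfull v=0$, i.e.\ $Qv=0$. Since $Qv=0$, the constraints in \eqref{eq:omegamat} at $rv$ reduce to $c_0^{\pm}\ge0$. These hold because $\zmin<0<\zmax$, $\umin<0<\umax$ and $\alpha_1>0$, so $rv\in\Omega$ for every $r>0$.

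Because $v\notin\Oh$, some $k$ has $\Cbar\Ab^kv\neq0$. Hence the real-analytic function $t\mapsto\Cbar e^{\Ab t}v$ is not identically zero: its Taylor coefficients at $0$ are $\Cbar\Ab^kv/k!$. Its zeros are therefore isolated, and I can choose $t^\ast\in(0,\bar\tau]$ with $\Cbar e^{\Ab t^\ast}v\neq0$. The rows of $\Cbar$ are rows of $Q$ (namely $\Cz$ and $\Kfull$), so $Qe^{\Ab t^\ast}v\neq0$. Since $T_Q$ is nonsingular, $w:=T_QQe^{\Ab t^\ast}v\neq0$. I fix an index $j$ with $w_j\neq0$.

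\emph{Bounding the forced part.} By variation of constants (time-invariance lets me take $t_0$ arbitrary),
\[
\xb(t_0+t^\ast)=r\,e^{\Ab t^\ast}v+\int_0^{t^\ast}e^{\Ab(t^\ast-s)}\By\zeta(t_0+s)\,ds .
\]
The integral has norm at most $M\gamma$, where $\gamma:=\sqrt m\,t^\ast\sup_{s\in[0,t^\ast]}\lVert e^{\Ab s}\rVert\,\lVert\By\rVert$; the factor $\sqrt m$ converts the $\infty$-norm bound to a Euclidean one. Consequently
\[
\big|(T_QQ\,\xb(t_0+t^\ast))_j-r\,w_j\big|\le\lVert T_QQ\rVert M\gamma .
\]
I then set
\[
r^\ast:=\frac{\max_i\{|(c_0^{-})_i|,|(c_0^{+})_i|\}+\lVert T_QQ\rVert M\gamma}{|w_j|} .
\]
For $r>r^\ast$, if $w_j>0$ then $(-T_QQ\xb+c_0^{+})_j<0$, and if $w_j<0$ then $(T_QQ\xb+c_0^{-})_j<0$. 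In either case $\xb(t_0+t^\ast)\notin\Omega$, uniformly over all admissible $\zeta$.

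\emph{Main obstacle.} The only delicate step is ensuring that $t^\ast$ can be taken inside the arbitrary window $(0,\bar\tau]$ while keeping $Qe^{\Ab t^\ast}v\neq0$. The analyticity and isolated-zeros argument handles this. Note that the order of quantifiers is respected: $t^\ast$, $v$ and $r^\ast$ depend only on $M$ and $\bar\tau$, not on $\zeta$.
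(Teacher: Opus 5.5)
Your proof is correct and takes a different route from the paper's.

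The paper first proves that $\ker Q=\Oh$ is equivalent to \eqref{eq:vanish}. It then picks $v\in\ker Q$ with $\Cz\Ab^{2}v\neq0$ or $\Kfull\Ab v\neq0$, and in each of the two cases follows a single scalar output $\beta=g\T R\xb$, with $R=\Cz\Ab$ or $R=\Kfull$, which $\Omega$ bounds by $B$. The free response $f$ of $\beta$ has $f(0)=0$ and $f'(0)=\gamma=\lVert g\rVert^{2}>0$. A second-order Taylor bound then gives the explicit time $t^\ast=\min\{\bar\tau,\gamma/\kappa\}$ with $f(t^\ast)\ge\gamma t^\ast/2$.

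You instead take an arbitrary $v\in\ker Q\setminus\Oh$ and apply the identity theorem to the entire function $t\mapsto\Cbar e^{\Ab t}v$ to obtain some $t^\ast\in(0,\bar\tau]$ with $Qe^{\Ab t^\ast}v\neq0$. You then use directly that \eqref{eq:omegamat} confines every component of $T_QQ\xb$ to a bounded interval.

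Your route is shorter, avoids both the case split and the reduction to \eqref{eq:vanish}, and works for every $v\in\ker Q\setminus\Oh$, not only for those violating \eqref{eq:vanish}. Its price is that $t^\ast$ is found non-constructively: any point of $(0,\bar\tau]$ outside the finite zero set of a nonvanishing component will do.

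The paper's route gives computable $t^\ast$ and $r^\ast$, and shows that the escape occurs through $\dot z$ or $u$ at the initial rate $\gamma$. Along the way it also establishes the equivalence of \eqref{eq:vanish} with $\ker Q=\Oh$, which is reused in the proof of \cref{thm:fin}.
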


\Cref{prop:escape} underscores the need for the finiteness condition: if it is violated, trajectories can leave $\Omega$ for every bounded $\ycmd$.
In \cref{thm:fin}, we show that requiring every $\nu_k(\Omega)$ to be finite is equivalent to the finiteness condition.

\begin{proposition}\label{thm:fin}
$\forall\alpha^{1}\in\R_{++}^{4}$,
\[
\nu_k\big(\Omega\big)>-\infty\ \ \forall k=1,\dots,N
\;\;\Longleftrightarrow\;\;
\ker Q=\Oh .
\]
\end{proposition}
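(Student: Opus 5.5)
The plan is to exploit the polyhedral structure of $\Omega$ in \eqref{eq:omegamat}. Since $T_Q$ is nonsingular, $\Omega=\{\xb:\ T_Q^{-1}(-c_0^{-})\le\ldots\}$ is the preimage under $Q$ of a polytope. Hence $\Omega=P+\ker Q$ for some bounded set $P$. Moreover, $\Omega\ni 0$ because $\zmin<0<\zmax$ and $\umin<0<\umax$, so $\Omega\neq\varnothing$. Each function $\lambda^{k\top}b(\xb)=\lambda^{k\top}\Bz\xb+\lambda^{k\top}b_0$ is affine. Its infimum over $\Omega$ is therefore finite if and only if the row vector $\lambda^{k\top}\Bz$ annihilates $\ker Q$:
\begin{itemize}
\item if it does, the infimum is over the bounded set $P$;
\item if $\lambda^{k\top}\Bz v\neq0$ for some $v\in\ker Q$, then moving along $\pm v$ drives it to $-\infty$.
\end{itemize}
So the statement reduces to showing that
\[
\ker Q\subseteq\ker(\lambda^{k\top}\Bz)\ \forall k \quad\Longleftrightarrow\quad \ker Q=\Oh .
\]
Because every $\lambda\in\Lambda$ is a nonnegative combination of the rays, the left side is equivalent to $\lambda\T\Bz v=0$ for all $\lambda\in\Lambda$ and $v\in\ker Q$.

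($\Leftarrow$) Assume $\ker Q=\Oh$. Since $\Oh$ is $\Ab$-invariant and contained in $\ker\Cz\cap\ker\Kfull$, any $v\in\ker Q$ satisfies $\Cz\Ab^2v=0$, $\Cz\Ab v=0$, $\Cz v=0$ and $\Kfull\Ab v=0$, $\Kfull v=0$. Reading off \eqref{eq:qpdata}, this gives $\Bz v=0$, hence $\lambda^{k\top}\Bz v=0$ for every $k$.

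($\Rightarrow$) Suppose $\ker Q\supsetneq\Oh$; note that $\Oh\subseteq\ker Q$ always holds. If $\ker Q$ were $\Ab$-invariant, it would be an $\Ab$-invariant subspace of $\ker\Cbar$ and thus contained in $\Oh$. Hence there is $v\in\ker Q$ with $\Ab v\notin\ker Q$. Since $\Cz\Ab v=0$ already, this means $a:=\Cz\Ab^2v\neq0$ or $c:=\Kfull\Ab v\neq0$. For $v\in\ker Q$ the vector $w:=\Bz v$ then has blocks $(-a,\,a,\,c,\,-c,\,0,\,0)$.

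The key step, and the only non-routine one, is to exhibit some $\lambda\in\Lambda$ with $\lambda\T w\neq0$. The symmetric elements pairing rows $i$ of the first two blocks give zero pairing, so I use the box rows instead. Let $r\T$ be row $i$ of $\Hu K_I$. Take $\lambda$ with $e_i$ in block 1, zeros in blocks 2–4, $r^{-}$ in block 5 and $r^{+}$ in block 6, where $r^{\pm}\ge0$ are the positive and negative parts of $r$. Then $\lambda\ge0$ and $A_\zeta\T\lambda=r+r^{-}-r^{+}=0$, so $\lambda\in\Lambda$, and $\lambda\T w=-a_i$. Analogously, using row $i$ of $K_I$ with $e_i$ in block 3 gives $\lambda\T w=c_i$. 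Choosing $i$ with $a_i\neq0$ or $c_i\neq0$ yields $\lambda\in\Lambda$ with $\lambda\T\Bz v\neq0$.

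Writing $\lambda=\sum_k c_k\lambda^k$, some ray $\lambda^k$ has $\lambda^{k\top}\Bz v\neq0$. Replacing $v$ by $-v\in\ker Q$ if needed, $\lambda^{k\top}b(sv)\to-\infty$ as $s\to\infty$ along $sv\in\Omega$, so $\nu_k(\Omega)=-\infty$. This argument holds for every $\alpha^{1}\in\R_{++}^4$ and does not use \ref{ass:A1} or \ref{ass:A3}.
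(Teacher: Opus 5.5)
Your proof is correct and follows the paper's route: you make the same reduction of finiteness to $\lambda^{k\top}\Bz v=0$ for all $v\in\ker Q$, and your explicit vectors $(e_i;0;0;0;r^-;r^+)$ and their $K_I$ analogues are exactly the $(p,q)=(e_i,0)$ and $(p,q)=(0,e_i)$ instances of the paper's surjection $\lambda\mapsto(p,q)$ from $\Lambda$ onto $\R^{2m}$. The only difference is cosmetic: you link $\ker Q=\Oh$ to the vanishing of $\Cz\Ab^{2}v$ and $\Kfull\Ab v$ on $\ker Q$ (the paper's \eqref{eq:vanish}) through the fact that $\Oh$ is the largest $\Ab$-invariant subspace of $\ker\Cbar$, whereas the paper proves this link by induction.
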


\Cref{prop:kerQO} simplifies the finiteness condition to a single zero test of a $2m\times d$ matrix.

\begin{proposition}\label{prop:kerQO}
Define $V\in\R^{n_p\times d}$ such that
\[
\operatorname{span}V=\ker\begin{bmatrix}\Clim\\ \Clim A_p\end{bmatrix},\qquad\operatorname{rank}V=d .
\]
Then, if $\Ab$ satisfies \textnormal{\ref{ass:A3}},
\[
\ker Q=\Oh \iff
\begin{bmatrix}\Clim A_p^{2}\\ K_PA_p+K_I\Creg\end{bmatrix}V=0 .
\]
\end{proposition}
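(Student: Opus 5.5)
The plan is to compute $\ker Q$ explicitly and then use the fact that $\Oh$ is the largest $\Ab$-invariant subspace contained in $\ker\Cbar$. This turns the equality $\ker Q=\Oh$ into an $\Ab$-invariance test on $\ker Q$.

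\emph{Step 1: the inclusion $\Oh\subseteq\ker Q$ always holds.} The $k=0$ term in \eqref{eq:unobs} gives $\Cz\xb=0$ and $\Kfull\xb=0$. The $k=1$ term gives $\Cz\Ab\xb=0$. These three conditions are exactly $Q\xb=0$.

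\emph{Step 2: reduce the equality to invariance.} Note that $\ker Q\subseteq\ker\Cbar$, since the rows of $\Cbar$ are rows of $Q$. I claim $\ker Q=\Oh$ if and only if $\Ab\ker Q\subseteq\ker Q$.
\begin{itemize}
\item If $\ker Q=\Oh$, then $\ker Q$ is $\Ab$-invariant because $\Oh$ is.
\item Conversely, if $\ker Q$ is $\Ab$-invariant, then $\Cbar\Ab^k v=0$ for every $v\in\ker Q$ and every $k\ge0$. Hence $\ker Q\subseteq\Oh$, and Step 1 gives equality.
\end{itemize}

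\emph{Step 3: parametrize $\ker Q$ explicitly.} Write $\xb=[x_p\T\ e_I\T]\T$. From \eqref{eq:loop},
\[
\Cz\Ab\xb=\Clim A_px_p-\Hu(K_Px_p+K_Ie_I)=\Clim A_px_p-\Hu\Kfull\xb .
\]
So on $\{\Kfull\xb=0\}$ the condition $Q\xb=0$ reads $\Clim x_p=0$, $\Clim A_px_p=0$, $K_Px_p+K_Ie_I=0$.

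Next I use \ref{ass:A3} to show $K_I$ is nonsingular. If $K_Iw=0$, then $\Ab[0\T\ w\T]\T=[-(B_pK_Iw)\T\ \ -(\Dreg K_Iw)\T]\T=0$. A Hurwitz $\Ab$ has no zero eigenvalue, so $w=0$.

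Consequently
\[
\ker Q=\Big\{\begin{bmatrix}Va\\ -K_I^{-1}K_PVa\end{bmatrix}:\ a\in\R^d\Big\},
\]
and the $x_p$-component of $\ker Q$ ranges over all of $\operatorname{span}V$. This step is the main subtlety: without invertibility of $K_I$, not every $x_p\in\operatorname{span}V$ would lift to an element of $\ker Q$, and the test would only be over a subspace of $\operatorname{span}V$.

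\emph{Step 4: check invariance directly.} For $v\in\ker Q$ we have $u=-\Kfull v=0$, so
\[
\Ab v=\begin{bmatrix}A_px_p\\ \Creg x_p\end{bmatrix}.
\]
Now apply the three defining conditions of $\ker Q$ to $\Ab v$:
\begin{itemize}
\item $\Clim A_px_p=0$ holds automatically, since $x_p\in\operatorname{span}V$.
\item $\Clim A_p(A_px_p)=0$ is the requirement $\Clim A_p^2x_p=0$.
\item $K_PA_px_p+K_I\Creg x_p=0$ is the remaining requirement.
\end{itemize}
Since $x_p$ ranges over all of $\operatorname{span}V$ by Step 3, invariance of $\ker Q$ is equivalent to
\[
\begin{bmatrix}\Clim A_p^{2}\\ K_PA_p+K_I\Creg\end{bmatrix}V=0 .
\]
Combined with Step 2, this is the stated equivalence.
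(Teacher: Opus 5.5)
Your proof is correct, but it takes a different route from the paper's. The paper computes both subspaces explicitly. It shows $\ker Q=\phi(\operatorname{span}V)$, where $\phi(w)=[w\T\ \ (-K_I^{-1}K_Pw)\T]\T$. It then uses the inductive identity \eqref{eq:rec} to show $\Oh=\phi(\Nt)$, where $\Nt$ is the unobservable subspace of the plant-level pair $(A_p,C_N)$ with $C_N=[\Clim\T\ \ (K_PA_p+K_I\Creg)\T]\T$. By injectivity of $\phi$, this reduces $\ker Q=\Oh$ to $\operatorname{span}V\subseteq\Nt$, that is, to $A_p$-invariance of $\operatorname{span}V$ together with $(K_PA_p+K_I\Creg)V=0$.

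You never compute $\Oh$. You use only $\Oh\subseteq\ker Q\subseteq\ker\Cbar$ and the fact that $\Oh$ is the largest $\Ab$-invariant subspace of $\ker\Cbar$. This turns the question into a one-step check of $\Ab\ker Q\subseteq\ker Q$, which is the same reduction the paper proves as \eqref{eq:vanish} inside the proof of \cref{prop:escape}. Your argument is therefore that equivalence combined with the explicit parametrization of $\ker Q$.

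Your route is shorter and avoids the induction altogether. The paper's route buys the extra structural identity $\Oh=\phi(\Nt)$, which expresses the closed-loop unobservable subspace through a plant-level pair. In both arguments, \ref{ass:A3} enters only through the invertibility of $K_I$. You correctly flag that this is what makes the lift cover all of $\operatorname{span}V$ in the necessity direction.
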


An important point to note is that $\operatorname{rank}Q\le3m$.
Since this implies that $\dim\ker Q\ge n_p-2m$, it follows that the finiteness condition is violated for any plant of order $n_p>2m$ whose pair $(\Ab,\Cbar)$ is observable.
That is, for all such high-order plants one can never guarantee that $\Omega$ remains a compact set.
In \S\ref{sec:rpi} we show how, when $d>0$,\footnote{When $d=0$, e.g.\ when $n_p=m$, \eqref{eq:omegamat} implies that $\Omega$ is compact.} one can construct a compact set $\Sset\subseteq\Omega$ that can be made forward invariant using the I-O CBF Governor.

\subsection{Necessary and Sufficient Conditions for SIOCF}\label{sec:rpi}

In order to find a compact subset $\Sset$ in $\Omega$ that is forward invariant, we transform the state $\xb$ as $\xi=\Tz\xb$, where $\xi=[\,z\T\ \ \dot z\T\ \ \eta\T\,]\T$.
We will determine the compact set $\Sset$ by determining a region $E$ where the unconstrained state $\eta$ will remain forward invariant.
In \cref{lem:normalform}, we describe the construction of $\eta$.
Let $W_p\in\R^{(n_p-m)\times n_p}$ have rows spanning $\ker(B_p\T)$, and define $\etat:=W_px_p\in\R^{n_p-m}$.

\begin{lemma}\label{lem:normalform}
Under \textnormal{\ref{ass:A1}}--\textnormal{\ref{ass:A3}}, the following hold.
\begin{enumerate}[label=(\alph*),leftmargin=2.2em]
\item\label{lem:nf-a} $T:=[\Clim\T\ \ W_p\T]\T$ is nonsingular;
\item\label{lem:nf-b} $\dot\etat=A_\eta\,\etat+B_\eta\,z$,\\
where $A_\eta:=W_pA_pR_2$ and $B_\eta:=W_pA_pR_1$ with the partition $T^{-1}=:[R_1\ \ R_2]$, $R_1\in\R^{n_p\times m}$ and $R_2\in\R^{n_p\times(n_p-m)}$;
\item\label{lem:nf-c} $A_\eta$ is Hurwitz;
\item\label{lem:nf-d} \mbox{}\hfill$\dot\eta=A_\eta\,\eta+\Bd\,\dot z ,$\hfill\refstepcounter{equation}\label{eq:etadyn}\textup{(\theequation)}\newline
where $\eta:=\etat+\Bd\,z=\Wt x_p$ with $\Bd:=A_\eta^{-1}B_\eta$ and $\Wt:=W_p+\Bd\Clim$;
\item\label{lem:nf-e} $\xi:=\Tz\,\xb$, where $\Tz:=\big[\Cz\T\ \ (\Cz\Ab)\T\ \ [\,\Wt\ \ 0\,]\T\big]\T$ is nonsingular.
\end{enumerate}
\end{lemma}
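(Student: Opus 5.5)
The plan is to prove the five parts in order, using only linear algebra together with \ref{ass:A1}--\ref{ass:A3}. Throughout I use that $\ker W_p=\operatorname{range}B_p$, because the rows of $W_p$ span $\ker(B_p\T)$ and $\operatorname{rank}B_p=m$ (the latter follows from \ref{ass:A1}).

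\emph{Part \ref{lem:nf-a}.} Suppose $\Clim v=0$ and $W_pv=0$. Then $v=B_pw$ for some $w$, so $\Hu w=\Clim B_pw=0$. By \ref{ass:A1} this gives $w=0$, hence $v=0$. Since $T$ is square of size $n_p$, it is nonsingular.

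\emph{Part \ref{lem:nf-b}.} Because $W_pB_p=0$, we have $\dot\etat=W_pA_px_p$. Writing $x_p=T^{-1}[z\T\ \etat\T]\T=R_1z+R_2\etat$ gives the stated form directly.

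\emph{Part \ref{lem:nf-d}.} This follows once \ref{lem:nf-c} gives invertibility of $A_\eta$. By definition $A_\eta\Bd=B_\eta$, so
\[
\dot\eta=A_\eta\etat+B_\eta z+\Bd\dot z=A_\eta(\etat+\Bd z)+\Bd\dot z .
\]
The identity $\eta=\Wt x_p$ is immediate from $z=\Clim x_p$.

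\emph{Part \ref{lem:nf-c}.} This is the main step; I plan to show that every eigenvalue of $A_\eta$ is an invariant zero of $(A_p,B_p,\Clim,0)$.
\begin{enumerate}[label=(\roman*),leftmargin=2.2em]
\item Let $A_\eta w=sw$ with $w\neq0$, and set $x:=R_2w$. From $TT^{-1}=I$ we have $\Clim R_1=I$, $\Clim R_2=0$, $W_pR_1=0$ and $W_pR_2=I$. In particular $x\neq0$ and $\Clim x=0$.
\item Next, $W_p(A_p-sI)x=A_\eta w-sw=0$. Hence $(A_p-sI)x\in\operatorname{range}B_p$, so there is $u$ with $(A_p-sI)x+B_pu=0$.
\item Then $[x\T\ u\T]\T\neq0$ lies in the kernel of the Rosenbrock matrix at $s$.
\item By \ref{ass:A1} the system is square with full normal rank. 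Therefore this rank drop means $s$ is an invariant zero, and \ref{ass:A2} places it in $\mathbb C^-$.
\end{enumerate}
The only care needed is matching the paper's definition of invariant zero. I expect this definitional bookkeeping to be the main obstacle, though it is not a serious one.

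\emph{Part \ref{lem:nf-e}.} I will show $\ker\Tz=\{0\}$. First compute
\[
\Cz\Ab\xb=\Clim(A_p-B_pK_P)x_p-\Hu K_Ie_I .
\]
Now suppose $\Cz\xb=0$, $\Cz\Ab\xb=0$ and $\Wt x_p=0$.
\begin{enumerate}[label=(\roman*),leftmargin=2.2em]
\item From $\Clim x_p=0$ and $\Wt x_p=W_px_p+\Bd\Clim x_p=0$ we get $W_px_p=0$. By \ref{lem:nf-a}, $x_p=0$.
\item Then $\Hu K_Ie_I=0$, and by \ref{ass:A1} this gives $K_Ie_I=0$.
\item Finally, $K_I$ is nonsingular by \ref{ass:A3}. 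If $K_Iw=0$, then $\Ab[0\T\ w\T]\T=0$, and since a Hurwitz $\Ab$ is nonsingular, $w=0$.
\end{enumerate}
Hence $e_I=0$ and $\xb=0$, so $\Tz$ is nonsingular.
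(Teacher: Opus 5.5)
Your proof is correct. Parts (b), (d) and (e) follow the paper's argument, including the observation that \ref{ass:A3} forces $K_I$ to be nonsingular, which the paper proves separately inside the proof of Proposition~\ref{prop:kerQO}.

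The differences are in (a) and, more substantively, in (c). For (a), you argue directly on the kernel via $\ker W_p=\operatorname{range}B_p$. The paper instead multiplies $T$ by the nonsingular $M=[B_p\ \ W_p\T]$ and reads off a block upper-triangular matrix with diagonal blocks $\Hu$ and $W_pW_p\T$. The two are equivalent.

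For (c), the paper conjugates the Rosenbrock matrix by $\operatorname{diag}(T,I_m)$ and obtains $\det P(\lambda)=\pm\det(\Hu)\det(A_\eta-\lambda I)$. This shows at once that the invariant zeros are exactly the eigenvalues of $A_\eta$, with multiplicities, and that $P$ has full normal rank. You instead build an explicit null vector $[(R_2w)\T\ u\T]\T$ of $P(s)$ from an eigenvector $w$ of $A_\eta$, working over $\mathbb{C}$, which is fine. This proves only the inclusion of the spectrum of $A_\eta$ in the set of invariant zeros. That inclusion is all that Hurwitz-ness requires, and it avoids the determinant computation.

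The ``bookkeeping'' you flag is harmless. With the paper's definition (invariant zeros are the roots of $\det P(\lambda)$), singularity of $P(s)$ is already the conclusion. Under the normal-rank definition, the fact you assert without proof, namely full normal rank from \ref{ass:A1}, does hold. It follows, for example, from $s\,\Clim(sI-A_p)^{-1}B_p\to\Hu$ as $s\to\infty$, or from the paper's determinant identity itself.
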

It should be noted that \cref{lem:normalform}\ref{lem:nf-c} follows from \textnormal{\ref{ass:A2}}, since the eigenvalues of $A_\eta$ are the invariant zeros of $(A_p,B_p,\Clim)$ \cite{Isidori95}.
\Cref{lem:normalform} in turn is utilized to prove \cref{thm:mainSE}.

The input $\dot z$ in \eqref{eq:etadyn} satisfies
\begin{equation}\label{eq:capset}
\dot z(\xb)\in\W\qquad\forall\xb\in\Omega ,
\end{equation}
where $\W:=\{w\in\R^m:\ -\alpha_1w_z\le w\le\alpha_1w_z\}$ and $w_z:=\zmax-\zmin$.
This follows from the inequalities in $\Omega$ related to $h_1^{\pm}$ and $\psi_1^{\pm}$.
The property \eqref{eq:capset} in turn is used to establish positive invariance of $\eta$ in \eqref{eq:etadyn}, and therefore in solving \cref{prob:cert}.
The following definition is needed:

\begin{definition}[Robust positive invariance
{\normalfont\cite{BlanchiniMiani15}}]\label{def:rpi} A compact convex $E\subset\R^{n_p-m}$ with $0\in E$ is \emph{robustly positively invariant (RPI)} if every solution of
\[
\dot{\eta}=A_\eta\eta+\Bd\,w,\qquad w(t)\in\W ,
\]
is such that $\eta(t_0)\in E\Rightarrow\eta(t)\in E$ $\ \forall t\ge t_0$.
\end{definition}

We now provide a complete solution to \cref{prob:cert} in \cref{thm:mainSE}, which corresponds to the first main contribution of the paper.

\begin{theorem}\label{thm:mainSE}
Let $E$ be any RPI set of \eqref{eq:etadyn} with $\eta(t_0)\in E$, and define the compact set
\begin{equation}\label{eq:SEset}
\SE:=\Omega\cap \{\xb:\ \eta(\xb)\in E\} .
\end{equation}
Then the following hold:
\begin{enumerate}[label=(\alph*),leftmargin=2.2em]
\item\label{thm:SE-a} Let the finiteness condition hold. Then, the I-O CBF Governor satisfies SIOCF on $\Omega$ if and only if $\min_{k}\nu_k(\Omega)\ge0$;
\item\label{thm:SE-b} the I-O CBF Governor satisfies SIOCF on $\SE$ if and only if $\min_{k}\nu_k(\SE)\ge0$.
\end{enumerate}
\end{theorem}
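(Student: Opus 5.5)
Both parts follow the same template. Necessity is immediate from \cref{prop:cert}. If SIOCF holds on a set $\Sset$, then \ref{def:P1} requires $\yg(t_0)$ to exist for every $\xb(t_0)\in\Sset$. Hence $F(\xb)\neq\varnothing$ for all $\xb\in\Sset$, and \cref{prop:cert} gives $\min_k\nu_k(\Sset)\ge0$. Everything therefore rests on sufficiency. Assume $\min_k\nu_k(\Sset)\ge0$ with $\Sset=\Omega$ or $\Sset=\SE$; then \cref{prop:cert} gives $F(\xb)\neq\varnothing$ at every $\xb\in\Sset$. Since $F(\xb)$ is a nonempty compact polytope, \eqref{eq:qpproj} has a unique minimizer that is Lipschitz in $\xb$ and piecewise continuous in $t$ through $\ycmd$ \cite{Bemporad02}. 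Consequently the closed loop $\dot\xb=\Ab\xb+\By\yg(\xb,\ycmd(t))$ has unique local (Carathéodory) solutions. The remaining task is to show forward invariance of $\Sset$. This gives \ref{def:P2}, and with it global existence, hence \ref{def:P1}. Properties \ref{def:P3} and \ref{def:P4} then follow, because $\Sset\subseteq\Omega$ encodes $h_1^\pm\ge0$ and $h_u^\pm\ge0$, and $\lVert\zeta\rVert_\infty\le\ag$ is a constraint of $F$.

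For invariance of $\Omega$ I will use the standard HOCBF comparison argument. While $\xb(t)\in\Omega$, the governor enforces $\psi_2^\pm=\dot\psi_1^\pm+\alpha_2\psi_1^\pm\ge0$, so $\psi_1^\pm(t)\ge e^{-\alpha_2(t-t_0)}\psi_1^\pm(t_0)\ge0$. Then $\dot h_1^\pm+\alpha_1h_1^\pm\ge0$ gives $h_1^\pm(t)\ge0$, and likewise $\psi_u^\pm\ge0$ gives $h_u^\pm(t)\ge0$. To make this rigorous, let $\tau:=\sup\{t:\xb(s)\in\Sset\ \forall s\in[t_0,t]\}$ and suppose $\tau<\infty$. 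On $[t_0,\tau]$ the comparison lemma shows that all barrier inequalities hold. Since $\Omega$ is closed and the functions $h$ and $\psi_1$ are continuous along solutions, the state stays in $\Omega$ on a right-neighborhood of $\tau$. Here I expect a subtlety: at a boundary point where $F(\xb)$ is nonempty only on $\Omega$ itself, I must show that the solution cannot exit before the constraint takes effect. I will handle this with Nagumo-type reasoning, namely that $\psi_2^\pm\ge0$ at $\tau$ together with continuity of $\yg$ prevents crossing. This requires feasibility at $\tau$, which holds because $\xb(\tau)\in\Sset$ by closedness.

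Part \ref{thm:SE-a} has one extra point: boundedness of solutions. Under the finiteness condition $\ker Q=\Oh$, I decompose $\xb$ into a component in $\ker Q$ and a complement. On the complement, $\Omega$ bounds $Q\xb$, hence bounds $z$, $\dot z$ and $u$. The $\Oh$-component evolves autonomously under $\Ab$ restricted to the $\Ab$-invariant subspace $\Oh$, driven by a bounded input $\By\yg$ with $\lVert\yg\rVert_\infty\le\ag$; since $\Ab$ is Hurwitz by \ref{ass:A3}, this component stays bounded. This rules out finite escape and gives global existence. For part \ref{thm:SE-b}, I also need invariance of $\{\eta(\xb)\in E\}$. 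While $\xb\in\Omega$, \eqref{eq:capset} gives $\dot z\in\W$, so by \eqref{eq:etadyn} and \cref{def:rpi} $\eta$ remains in $E$. Compactness of $\SE$ follows from \cref{lem:normalform}\ref{lem:nf-e}: on $\SE$, $z$ and $\dot z$ are bounded because $\SE\subseteq\Omega$, and $\eta\in E$ is bounded, so $\xi$ is bounded and $\Tz$ is nonsingular. Compactness excludes finite escape time.

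I expect the main obstacle to be the coupling in the sufficiency argument. Feasibility is guaranteed only on $\Sset$, while invariance of $\Sset$ is proved using feasibility. I plan to close this loop with the maximal-interval argument above. Invariance of $\Omega$ is needed before RPI of $E$ can be invoked, because $\dot z\in\W$ only holds inside $\Omega$, so the two invariances must be established jointly on the same maximal interval.
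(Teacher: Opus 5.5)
Your proposal is correct and follows the paper's proof essentially step for step. Necessity comes from \textnormal{\ref{def:P1}} and \cref{prop:cert}. Sufficiency uses the Lipschitz QP solution of \cite{Bemporad02}, the successive comparison-lemma argument on a maximal interval, boundedness, and extension to $[t_0,\infty)$. For (b), it uses compactness of $\SE$ via \cref{lem:normalform}\ref{lem:nf-e} and RPI of $E$ under $\dot z\in\W$ on $\Omega$.

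There are two minor remarks. First, the finiteness condition is not needed for boundedness in (a): the paper bounds $\xb$ directly from $\Ab$ Hurwitz and $\lVert\yg\rVert_\infty\le\ag$. Your $\Oh$-component is also not autonomous, since it is driven by the bounded complement component. Second, the boundary subtlety you flag, which the paper passes over, is resolved by a Nagumo tangency condition at every point of $\Sset$. The inequality $\psi_2^\pm\ge0$ at $\tau$ together with continuity of $\yg$ does not by itself prevent crossing.
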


\begin{proof}[Proof of (a)]
($\Rightarrow$) From \cref{def:siocf}, \textnormal{\ref{def:P1}} requires that \eqref{eq:qpproj} have a solution for every $\xb(t_0)\in\Omega$, that is, $F(\xb)\neq\varnothing$ $\forall\xb\in\Omega$, which implies, by \cref{prop:cert}, that the SIOCF condition is satisfied on $\Omega$.

($\Leftarrow$) By \cref{prop:cert}, the SIOCF condition on $\Omega$ implies that $F(\xb)\neq\varnothing$ $\forall\xb\in\Omega$.
By \cite[Thm.~4]{Bemporad02}, $\yg$ is unique and piecewise affine, hence Lipschitz, in $(\xb,\ycmd)$, and the closed-loop solutions of \eqref{eq:loop} are unique for every $\xb(t_0)\in\Omega$ on a maximal interval $[t_0,\tau)$ \cite{Hale80}.
Since $\yg(t)\in F(\xb(t))$, all constraints in \eqref{eq:qp-con} are satisfied on $[t_0,\tau)$, including $\lVert\yg(t)\rVert_\infty\le\ag$, which is \textnormal{\ref{def:P4}}.
The comparison lemma \cite[Lemma~3.4]{Khalil02}, applied successively to $\dot\psi_1^{\pm}\ge-\alpha_2\psi_1^{\pm}$, $\dot h_1^{\pm}\ge-\alpha_1h_1^{\pm}$ and $\dot h_u^{\pm}\ge-\alpha_uh_u^{\pm}$, yields $\psi_1^{\pm},h_1^{\pm},h_u^{\pm}\ge0$ on $[t_0,\tau)$, since $\xb(t_0)\in\Omega$.
That is, $\xb(t)\in\Omega$, $z(t)\in\Z$ and $u(t)\in\U$, which are \textnormal{\ref{def:P2}} and \textnormal{\ref{def:P3}}.
Since $\Ab$ is Hurwitz by \textnormal{\ref{ass:A3}} and $\lVert\yg\rVert_\infty\le\ag$, $\xb(t)$ is bounded on $[t_0,\tau)$.
This allows $[t_0,\tau)$ to be extended to $[t_0,\infty)$ \cite{Hale80}, proving \textnormal{\ref{def:P1}}.

\emph{Proof of (b).}
On $\Omega$, $h_1^{\pm}\ge0$ and $\psi_1^{\pm}\ge0$ imply $z\in\Z$ and $\dot z\in\W$ by \eqref{eq:capset}.
For $\xb\in\SE$, in addition, $\eta(\xb)\in E$ with $E$ compact by \cref{def:rpi}; hence $\xi=\Tz\xb$ is bounded, and $\xb$ is bounded since $\Tz$ is nonsingular by \cref{lem:normalform}\ref{lem:nf-e}.
$\SE$ is closed as the intersection of the closed set $\Omega$ with the preimage of the closed set $E$ under $\xb\mapsto\eta(\xb)$, and therefore compact.
Since the solutions remain in $\Omega$ by the proof of (a), $\eta(t)$ remains in $E$ by \eqref{eq:etadyn} and \cref{def:rpi}, which yields $\xb(t)\in\SE$ for all $t\ge t_0$.
Replacing $\Omega$ by $\SE$ in the proof of (a) yields (b).
\end{proof}

\Cref{thm:mainSE}\ref{thm:SE-a} implies that under the finiteness condition, \cref{prob:cert} is solved on $\Sset=\Omega$ if and only if the SIOCF condition is satisfied.
\Cref{thm:mainSE}\ref{thm:SE-b} implies that \cref{prob:cert} is solved on $\Sset=\SE$, independent of the finiteness condition, if and only if the SIOCF condition is satisfied.

It can be shown that $\SE$ always exists, through the construction of a set $\Emin$ given by
\begin{equation}\label{eq:emin}
\Emin:=\operatorname{cl}\Big\{\int_0^{t}e^{A_\eta\tau}\Bd\,w(\tau)\,d\tau:\
t\ge0,\ w(\tau)\in\W\Big\},
\end{equation}
where $\operatorname{cl}(\cdot)$ denotes the closure operator.
From \eqref{eq:emin}, the integral is bounded, since $A_\eta$ is Hurwitz by \cref{lem:normalform}\ref{lem:nf-c} and $\W$ is bounded.
Given $\eta(t_0)\in\Pieta$, using reachability properties, one can show that an RPI set $E$ exists \cite{RakovicKouramas07} and can be constructed as
\begin{equation}\label{eq:Eset}
\begin{gathered}
E:=\operatorname{cl}\Big\{\operatorname{conv}\Big\{e^{A_\eta(t-t_0)}\eta(t_0)
+\int_{t_0}^{t}e^{A_\eta(t-\tau)}\Bd\,w(\tau)\,d\tau:\\
t\ge t_0,\ w(\tau)\in\W\Big\}\Big\},
\end{gathered}
\end{equation}
where $\operatorname{conv}(\cdot)$ denotes the convex hull and $\Pieta:=\{\eta(\xb):\ \xb\in\Omega\}$.
We also note that, since $E\supseteq\Emin$, $E$ can be enlarged by increasing $\alpha_1$.

\begin{remark}\label{rem:modular}
The I-O CBF Governor in \eqref{eq:qpproj} is designed in a modular manner, tasked only with modifying the command $\ycmd$ as $\yg$ so that the trajectories remain within admissible input and output limits but keeping the closed-loop design unchanged. The difference between $\ycmd$ and $\yg$ may be interpreted as an anti-windup component necessitated by the integral action in the closed loop. The overall control architecture is therefore modular in nature, preserves closed-loop stability and simultaneous I-O constraint satisfaction, and represents another contribution of the paper.
\end{remark}

\Cref{thm:mainSE} requires that the SIOCF condition be satisfied.
This in turn reduces to finding the sign of $\min_k\nu_k(\SE)$, which requires solving $N$ independent problems, one per extreme ray, and hence has a minimal computational burden of $O(N)$; each problem is a linear program for a polyhedral approximation of $E$ or a second-order cone program for an ellipsoidal approximation \cite[Ch.~6]{BlanchiniMiani15}, \cite[Secs.~2.1 and~3]{AlthoffFrehseGirard21}.
We also note that \cref{thm:mainSE} assures a solution to \cref{prob:cert} for a given $\alpha^{1}$.
It is quite possible that no such $\alpha^{1}$ exists for which the SIOCF condition is satisfied, as \cref{prop:fix4nec} shows.

\begin{proposition}\label{prop:fix4nec}
Let $\dot x=ax+u$ with $a>0$, $z=x$, $n_p=m=1$, and let $\Z=[-\zmax,\zmax]$, $\U=[-\umax,\umax]$ with $\umax<a\,\zmax$.
Then the following hold:
\begin{enumerate}[label=(\alph*),leftmargin=2.2em]
\item $\forall z(t_0)\in(\umax/a,\ \zmax]$ and for all piecewise continuous $u(t)\in\U$,
\[
z(t)\ \ge\ \tfrac{\umax}{a}+\big(z(t_0)-\tfrac{\umax}{a}\big)e^{a(t-t_0)},
\]
and the output box is violated in finite time;
\item if $\Ab$ in \eqref{eq:loop} satisfies \textnormal{\ref{ass:A3}}, then
\[
\min_k\nu_k(\SE)\le\tfrac{\alpha_1}{\sqrt2}(\umax-a\,\zmax)<0\ \ \ \forall\alpha^{1}\in\R_{++}^{4} .
\]
\end{enumerate}
\end{proposition}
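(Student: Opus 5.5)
The plan is to prove (a) by a scalar comparison argument. For (b), we exhibit one explicit extreme ray of $\Lambda$ and one explicit point of $\SE$ at which $\lambda^{k\top}b(\xb)$ equals the claimed bound.

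\emph{Part (a).} Since $u(t)\ge-\umax$, we have $\dot z=az+u\ge az-\umax$. Set $w:=z-\umax/a$; then $\dot w\ge aw$. The comparison lemma \cite[Lemma~3.4]{Khalil02} gives $w(t)\ge w(t_0)e^{a(t-t_0)}$, which is the stated bound. Since $a>0$ and $w(t_0)>0$, the right-hand side is unbounded, so $z(t)>\zmax$ after a finite time.

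\emph{Part (b), choice of ray.} Here $n_p=m=1$, so $\Hu=1$ and there is no internal coordinate $\eta$; hence $\SE=\Omega$. Assumption \ref{ass:A3} forces $K_I\neq0$, since otherwise $\Ab$ has a zero eigenvalue. Thus $A_\zeta=[K_I,\,-K_I,\,K_I,\,-K_I,\,1,\,-1]\T$, and $\lambda:=(e_1+e_4)/\sqrt2$ satisfies $A_\zeta\T\lambda=0$ and $\lambda\ge0$. Its support has size two, and no single-support nonnegative vector lies in $\ker A_\zeta\T$ (all entries of $A_\zeta$ are nonzero). So $\lambda$ is an extreme ray, and $\min_k\nu_k(\SE)\le\inf_{\xb\in\Omega}\lambda\T b(\xb)$.

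\emph{Part (b), evaluating $\lambda\T b$.} I would write rows~1 and~4 of $b$ as the $\zeta$-free parts of $\psi_2^{+}$ and $\psi_u^{-}$:
\[
\psi_2^{+}=-\ddot z-(\alpha_1{+}\alpha_2)\dot z+\alpha_1\alpha_2(\zmax-z),\qquad
\psi_u^{-}=\dot u+\alpha_u(u+\umax),
\]
with $\ddot z=a\dot z+\dot u$. In the sum the $\dot u$ terms cancel (this is exactly why $\zeta$ drops out), leaving
\[
\sqrt2\,\lambda\T b(\xb)=-(a+\alpha_1+\alpha_2)\dot z+\alpha_1\alpha_2(\zmax-z)+\alpha_u(u+\umax).
\]

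\emph{Part (b), the point.} I would choose a point of $\Omega$ with $u=-\umax$ and $\psi_1^{+}=0$, i.e. $\dot z=az-\umax=\alpha_1(\zmax-z)$. This gives
\[
z^\ast=\frac{\alpha_1\zmax+\umax}{a+\alpha_1},\qquad \zmax-z^\ast=\frac{a\zmax-\umax}{a+\alpha_1}>0 .
\]
Substituting, $\sqrt2\,\lambda\T b=-(a+\alpha_1)\alpha_1(\zmax-z^\ast)=\alpha_1(\umax-a\zmax)$, which is the claimed bound and is negative because $\umax<a\zmax$.

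\emph{Main obstacle.} The step needing care is verifying that this point actually lies in $\Omega$. First, one must check $z^\ast\in(-\zmax,\zmax)$. Second, $\psi_1^{-}\ge0$, i.e. $\dot z\ge-\alpha_1(z-\zmin)$, must hold; it does since $\dot z>0$. Third, $h_u^{\pm}\ge0$ holds with $u=-\umax$ on the boundary. Finally, one must show that $(z,u)=(z^\ast,-\umax)$ is realized by some $\xb=(x_p,e_I)$. Here $x_p=z^\ast$ and $e_I=(\umax-K_Pz^\ast)/K_I$, which is well defined because $K_I\neq0$.
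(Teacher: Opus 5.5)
Your proposal is correct and follows the paper's route: the same comparison argument for (a), and for (b) the same extreme ray $(1;0;0;1;0;0)/\sqrt2$ evaluated at the same point $\xb^\ast$ with $z^\ast=(\alpha_1\zmax+\umax)/(a+\alpha_1)$ and $u=-\umax$, where $\psi_1^{+}=h_u^{-}=0$. Your direct summation of the $\zeta$-free parts of $\psi_2^{+}$ and $\psi_u^{-}$ is the paper's identity $\lambda\T b=\lambda\T\hat b=-(a+\alpha_1)\Cz\Ab\xb$ (via \eqref{eq:bsplit}) written out explicitly, because your expression equals $-(a+\alpha_1)\dot z+\alpha_2\psi_1^{+}+\alpha_u h_u^{-}$.
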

The main reason for $\nu_k$ to become negative and for the output constraint to be violated is the conflict between input and output constraints, since keeping $z$ at $\zmax$ requires $u=-a\zmax$, and therefore a violation of the input constraint.
It is easy to show that such a violation occurs for some plants in \eqref{eq:plant} with any control policy, and for any $\alpha^{1}$.
We fix this conflict in the next section, where we provide a solution by expanding the design space from $\alpha^{1}$ to a vector $\alpha^{2}$.

\section{SIOCF Design Procedure}\label{sec:design}

We define a new design parameter $\alpha^{2}\in\R_{++}^{5}$ as
\begin{equation}\label{eq:alpha2}
\alpha^{2}:=[\,\alpha_1\ \ \alpha_2\ \ \alpha_u\ \ \ag\ \ \varrho\,]\T ,
\end{equation}
where $\varrho$ is a relaxation gain, with $\varrho\ge1$, that modifies the input constraint in \eqref{eq:req} as
\begin{equation*}\tag{2a$'$}\label{eq:requrho}
u(t)\in\varrho\,\U:=\{u\in\R^m:\ \varrho\,\umin\le u\le\varrho\,\umax\}\ \ \forall t\ge t_0 ,
\end{equation*}
where $\varrho\,\umin<0<\varrho\,\umax$.
We now introduce \cref{prob:design}.

\begin{problem}\label{prob:design}
Given the closed-loop system \eqref{eq:loop} and the constraint requirement \eqref{eq:requrho} and \eqref{eq:reqz}, find an $\alpha^{2}$ and a corresponding RPI set $E$ such that the I-O CBF Governor \eqref{eq:qpproj} satisfies SIOCF on $\SE$, with $\U$ in \textnormal{\ref{def:P3}} replaced by $\varrho\,\U$.
\end{problem}

\Cref{prob:design} is solved in the following sections, where $\alpha^{2}$ is grouped into two classes, $[\,\alpha_1\ \ \alpha_2\ \ \alpha_u\,]\T$ and $[\,\ag\ \ \varrho\,]\T$.
We will show in \S\ref{sec:solvable} that for arbitrary positive values of the constants in the first class, constants $\ag^{\star}$ and $\varrho^{\star}$ always exist in the second class.
In \S\ref{sec:fix12}, we
provide an efficient design for choosing the constants in the first class so as to enlarge $\SE$ while keeping $\varrho$ near unity.

\subsection{Finding an \texorpdfstring{$\alpha^{2}$}{α²}}\label{sec:solvable}

Since $\varrho$ enters the rows $h_u^{\pm}$ of \eqref{eq:omega}, the set $\SE$ varies with $\varrho$ and is denoted as $\SE(\varrho)$.
We now show that \cref{prob:design} is always solvable, which corresponds to the second main contribution of the paper.

\begin{theorem}\label{thm:fix4}
$\forall(\alpha_1,\alpha_2,\alpha_u)\in\R_{++}^{3}$, $\exists\,\ag^{\star},\varrho^{\star}$ such that\\
the I-O CBF Governor satisfies SIOCF on $\SE(\varrho)$ $\forall(\ag,\varrho)\in(\ag^{\star},\infty)\times[\varrho^{\star},\infty)$.
\end{theorem}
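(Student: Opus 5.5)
By \cref{thm:mainSE}\ref{thm:SE-b}, with $\U$ replaced by $\varrho\,\U$, it suffices to show that $\min_k\nu_k(\SE(\varrho))\ge0$ for all $(\ag,\varrho)$ in the stated range. The relaxation only rescales the input rows of $b_0$ to $\alpha_u\varrho\,\umax$ and $-\alpha_u\varrho\,\umin$. The matrix $A_\zeta$ is unchanged, so the cone $\Lambda$ and its extreme rays $\lambda^k$ from \cref{prop:cert} do not depend on $(\ag,\varrho)$. Partition each ray as $\lambda^k=[\lambda_1\T\ \lambda_2\T\ \lambda_3\T\ \lambda_4\T\ \lambda_5\T\ \lambda_6\T]\T$ conformally with the six row blocks of $A_\zeta$. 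Then
\[
\lambda^{k\top}b(\xb)=\sigma_k\ag+\mu_k\alpha_u\varrho+\alpha_1\alpha_2(\lambda_1\T\zmax-\lambda_2\T\zmin)+\lambda^{k\top}\Bz\xb ,
\]
with $\sigma_k:=\ones\T(\lambda_5+\lambda_6)\ge0$ and $\mu_k:=\lambda_3\T\umax-\lambda_4\T\umin\ge0$, since $\lambda\ge0$ and $\umin<0<\umax$. The remaining terms do not depend on $(\ag,\varrho)$ except through the set $\SE(\varrho)$ over which the infimum is taken.

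\textbf{Step 1: a bound on $\SE(\varrho)$ that is uniform in $\varrho$.} Fix an RPI set $E$ independent of $\varrho$, for instance the set \eqref{eq:Eset} built from the bounded $\W$, which does not involve $\varrho$. On $\SE(\varrho)\subseteq\Omega(\varrho)$, the constraints $h_1^\pm\ge0$ and $\psi_1^\pm\ge0$ are independent of $\varrho$ and give $z\in\Z$ and $\dot z\in\W$ by \eqref{eq:capset}. Together with $\eta\in E$, this places $\xi=\Tz\xb$ in the compact set $\Z\times\W\times E$. By \cref{lem:normalform}\ref{lem:nf-e}, $\SE(\varrho)\subseteq K:=\Tz^{-1}(\Z\times\W\times E)$, a compact set independent of $\varrho$ and $\ag$. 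Hence $c_k:=-\min_{\xb\in K}\lambda^{k\top}\Bz\xb$ is finite and bounds the state-dependent term uniformly. Nonemptiness of $\SE(\varrho)$ follows because $0\in\Omega(\varrho)$ (all limits straddle zero) and $0\in E$.

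\textbf{Step 2: rays with $\sigma_k=\mu_k=0$.} Here $\lambda_3,\dots,\lambda_6=0$, and $A_\zeta\T\lambda=0$ reduces to $(\Hu K_I)\T(\lambda_1-\lambda_2)=0$. I expect this to be the main obstacle, because it needs $\Hu K_I$ to be nonsingular. $\Hu$ is nonsingular by \ref{ass:A1}. For $K_I$, I would argue from \ref{ass:A3}: if $K_Iv=0$, then $\Ab[0\ v\T]\T=0$ by the block structure of $\Ab$, contradicting the fact that a Hurwitz matrix is nonsingular. Hence $\lambda_1=\lambda_2$. The first two blocks of $\Bz$ are negatives of each other, so the $\xb$-term cancels, and
\[
\lambda^{k\top}b=\alpha_1\alpha_2\lambda_1\T(\zmax-\zmin)\ge0 .
\]

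\textbf{Step 3: the remaining rays and the thresholds.} For rays with $\sigma_k>0$, define $\ag^{\star}$ as the maximum over such $k$ of $\bigl(c_k-\alpha_1\alpha_2(\lambda_1\T\zmax-\lambda_2\T\zmin)\bigr)/\sigma_k$, clipped below by a small positive number. For rays with $\sigma_k=0$ and $\mu_k>0$, define $\varrho^{\star}$ analogously as the maximum of $\bigl(c_k-\alpha_1\alpha_2(\cdot)\bigr)/(\alpha_u\mu_k)$, clipped below by $1$. Every term $\sigma_k\ag+\mu_k\alpha_u\varrho$ is nondecreasing in both arguments, and the uniform bound from Step 1 still applies as $\varrho$ grows. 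Therefore each $\nu_k(\SE(\varrho))\ge0$ on $(\ag^{\star},\infty)\times[\varrho^{\star},\infty)$, and \cref{thm:mainSE}\ref{thm:SE-b} concludes the proof.
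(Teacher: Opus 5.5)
Your proof is correct and follows essentially the paper's argument. You reduce to the SIOCF condition via \cref{thm:mainSE}\ref{thm:SE-b}, split $\lambda^{k\top}b$ affinely in $\ag$ and $\alpha_u\varrho$, bound the state term uniformly over a $\varrho$-independent compact superset of $\SE(\varrho)$, and use nonsingularity of $\Hu K_I$ for the rays with $\sigma_k=\mu_k=0$. The differences are minor: you use $\Tz^{-1}(\Z\times\W\times E)$ in place of the paper's $\SE(\infty)$; you assign the rays with $\sigma_k>0$ to $\ag$ first, so $\varrho^{\star}$ is driven only by rays with $\sigma_k=0$, which is where \cref{rem:rays} says the relaxation is needed; and you show nonnegativity of the $\Izero$ rays directly rather than identifying them as $\chi^{z,i}$, which the paper does only to obtain the margin $\varepsilon\le\nub$.
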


\begin{proof}
Choose any $\alpha_1,\alpha_2,\alpha_u>0$ and an RPI set $E$ as in \S\ref{sec:farkas}, and define $(v_1;\dots;v_6):=[v_1\T\,\cdots\,v_6\T]\T$ for $v_j\in\R^m$.
By \cref{thm:mainSE}\ref{thm:SE-b}, it suffices to construct $\ag^{\star}$ and $\varrho^{\star}$ such that, for some $\varepsilon\ge0$, $\nu_k(\SE(\varrho))\ge\varepsilon$ for all $k$ whenever $\ag>\ag^{\star}$ and $\varrho\ge\varrho^{\star}$.
To this end, we classify the extreme rays by partitioning $\{1,\dots,N\}$ into three sets $\Imu$, $\Isig$ and $\Izero$ as follows.
With \eqref{eq:rays} partitioned as
\begin{gather*}
\lambda^{k}=(\lambda^k_{z+};\lambda^k_{z-};\lambda^k_{u+};\lambda^k_{u-};\lambda^k_{M+};\lambda^k_{M-}) ,
\end{gather*}
we then define the three sets
\begin{gather*}
\Imu:=\{k:\ \mu_k>0\},\qquad \Isig:=\{k:\ \mu_k=0,\ \sigma_k>0\},\\
\Izero:=\{k:\ \mu_k=\sigma_k=0\},
\end{gather*}
where
\begin{equation}\label{eq:sigmu}
\sigma_k:=\lambda^{k\top}\eM\ge0,\
\mu_k:=\lambda^{k\top}_{u+}\umax-\lambda^{k\top}_{u-}\umin\ge0,
\end{equation}
and $\eM:=(0;0;0;0;\ones_m;\ones_m)$.
Both $\sigma_k$ and $\mu_k$ are nonnegative, since $\lambda^k\ge0$ and $\umax>0>\umin$ componentwise.
Choose
\begin{equation}\label{eq:thresholds}
\ag^{\star}:=\max_{k\in\Isig}\frac{\varepsilon-\beta_k^{\infty}}{\sigma_k},\qquad
\varrho^{\star}:=\max\Big\{1,\ \max_{k\in\Imu}\frac{\varepsilon-\beta_k^{\infty}}{\alpha_u\,\mu_k}\Big\},
\end{equation}
and $\varepsilon\in[0,\nub]$, where $\nub:=\min_i\tfrac{\alpha_1\alpha_2}{\sqrt2}(w_z)_i>0$.
We then define a set $\SE(\infty)$ and a parameter $\beta_k^{\infty}$ as
\begin{gather*}
\SE(\infty):=\{\xb:\ h_1^{\pm}(\xb)\ge0,\ \psi_1^{\pm}(\xb)\ge0,\ \eta(\xb)\in E\},\\
\beta_k^{\infty}:=\inf_{\xb\in\SE(\infty)}\lambda^{k\top}\big(\Bz\xb+b_0^{z}\big),
\end{gather*}
where $b_0^{z}:=(\alpha_1\alpha_2\zmax;-\alpha_1\alpha_2\zmin;0;0;0;0)$.
It can be noted that $\SE(\infty)$ coincides with $\SE(\varrho)$ with the input constraint removed.
By definition, the first two inequalities in $\SE(\infty)$ imply that $z\in\Z$ and $\dot z\in\W$ by \eqref{eq:capset}, and the third implies that $\eta\in E$.
Hence $\xi=\Tz\xb$ is bounded, and $\SE(\infty)$ is compact by \cref{lem:normalform}\ref{lem:nf-e}.
Moreover, $\SE(\varrho)\subseteq\SE(\infty)$ for all $\varrho\ge1$, since $\SE(\varrho)$ is obtained from $\SE(\infty)$ by adding the inequalities $h_u^{\pm}\ge0$.
Therefore $\beta_k^{\infty}$ is finite for all $k$, and the thresholds in \eqref{eq:thresholds} are finite.

We now show that the SIOCF condition holds on $\SE(\varrho)$ by examining $\nu_k$ in each of the three sets.
For this purpose, we decompose $\lambda^{k\top}b(\xb)$ using \eqref{eq:qpdata}.
With $\U$ replaced by $\varrho\,\U$, $b_0=b_0^{z}+(0;0;\alpha_u\varrho\umax;-\alpha_u\varrho\umin;\ag\ones_m;\ag\ones_m)$, so that $\lambda^{k\top}b(\xb)$ decomposes as
\begin{equation}\label{eq:affinerho}
\lambda^{k\top}b(\xb)=\lambda^{k\top}\big(\Bz\xb+b_0^{z}\big)+\alpha_u\varrho\,\mu_k+\sigma_k\ag .
\end{equation}
Taking the infimum of \eqref{eq:affinerho} over $\SE(\varrho)$, and noting that $\alpha_u\varrho\,\mu_k$ and $\sigma_k\ag$ do not depend on $\xb$, we obtain
\[
\nu_k\big(\SE(\varrho)\big)=\beta_k+\alpha_u\varrho\,\mu_k+\sigma_k\ag ,
\]
where $\beta_k:=\inf_{\xb\in\SE(\varrho)}\lambda^{k\top}(\Bz\xb+b_0^{z})\ge\beta_k^{\infty}$, the inequality following from $\SE(\varrho)\subseteq\SE(\infty)$.
Hence
\begin{equation}\label{eq:nufloor}
\nu_k\big(\SE(\varrho)\big)\ \ge\ \nu_k\big(\SE(\infty)\big)\ =\ \beta_k^{\infty}+\alpha_u\varrho\,\mu_k+\sigma_k\ag
\end{equation}
$\forall\varrho\ge1$ and $\forall\ag>0$.
We now show that $\nu_k(\SE(\varrho))\ge\varepsilon$ for every $k$ whenever $\ag>\ag^{\star}$ and $\varrho\ge\varrho^{\star}$, by considering the three sets separately.

\emph{Case (i): $k\in\Imu$.}
By \eqref{eq:thresholds}, $\alpha_u\varrho^{\star}\mu_k\ge\varepsilon-\beta_k^{\infty}$, and \arxref{eq:nufloor} yields, for all $\varrho\ge\varrho^{\star}$ and all $\ag>0$,
\[
\nu_k\big(\SE(\varrho)\big)\ \ge\ \beta_k^{\infty}+\alpha_u\varrho\,\mu_k+\sigma_k\ag\ \ge\ \beta_k^{\infty}+\alpha_u\varrho^{\star}\mu_k\ \ge\ \varepsilon ,
\]
since $\sigma_k\ag\ge0$.

\emph{Case (ii): $k\in\Isig$.}
By \eqref{eq:thresholds}, $\sigma_k\ag^{\star}\ge\varepsilon-\beta_k^{\infty}$, and \arxref{eq:nufloor} yields, for all $\ag>\ag^{\star}$ and all $\varrho\ge1$,
\[
\nu_k\big(\SE(\varrho)\big)\ \ge\ \beta_k^{\infty}+\sigma_k\ag\ \ge\ \beta_k^{\infty}+\sigma_k\ag^{\star}\ \ge\ \varepsilon .
\]
Cases (i) and (ii) thus give
\begin{equation}\label{eq:numusig}
\nu_k\big(\SE(\varrho)\big)\ \ge\ \varepsilon\qquad\forall k\in\Imu\cup\Isig
\end{equation}
whenever $\ag>\ag^{\star}$ and $\varrho\ge\varrho^{\star}$.

\emph{Case (iii): $k\in\Izero$.}
We first show that $\lambda^k=\chi^{z,i}$ for some $i$, where
\begin{equation}\label{eq:chiz}
\chi^{z,i}:=\tfrac{1}{\sqrt2}(e_i;e_i;0;0;0;0)
\end{equation}
and $e_i\in\R^m$ is the $i$-th standard basis vector.
Since $\lambda^k\ge0$, $\sigma_k=\lambda^{k\top}\eM=0$ implies $\lambda^k_{M+}=\lambda^k_{M-}=0$.
Likewise, since $\umax>0>\umin$ componentwise, $\mu_k=0$ implies $\lambda^k_{u+}=\lambda^k_{u-}=0$.
The condition $A_\zeta\T\lambda^k=0$ in \eqref{eq:cone} then reduces, by \eqref{eq:qpdata}, to
\[
(\Hu K_I)\T\big(\lambda^k_{z+}-\lambda^k_{z-}\big)=0 .
\]
Since $\Hu K_I$ is nonsingular by \textnormal{\ref{ass:A1}} and \textnormal{\ref{ass:A3}}, $\lambda^k_{z+}=\lambda^k_{z-}=:w\ge0$, so that
\[
\lambda^k=(w;w;0;0;0;0)=\sqrt2\sum_{i=1}^{m}w_i\,\chi^{z,i} .
\]
Each $\chi^{z,i}$ belongs to $\Lambda$, since $\chi^{z,i}\ge0$ and $A_\zeta\T\chi^{z,i}=\tfrac{1}{\sqrt2}(\Hu K_I)\T(e_i-e_i)=0$.
An extreme ray is not the sum of two elements of $\Lambda$ that are not its multiples; therefore exactly one $w_i$ is nonzero, and $\lVert\lambda^k\rVert_2=1$ yields $\lambda^k=\chi^{z,i}$.
By \eqref{eq:qpdata}, $\chi^{z,i\top}b(\xb)=\tfrac{1}{\sqrt2}\big(b_{z+}(\xb)+b_{z-}(\xb)\big)_i=\tfrac{\alpha_1\alpha_2}{\sqrt2}(w_z)_i$ for every $\xb$, and hence
\begin{equation}\label{eq:nuzero}
\nu_k=\chi^{z,i\top}b(\xb)=\tfrac{\alpha_1\alpha_2}{\sqrt2}(w_z)_i\ge\nub\ge\varepsilon\quad\forall k\in\Izero .
\end{equation}

Since \eqref{eq:numusig} and \eqref{eq:nuzero} imply that the SIOCF condition is satisfied on $\SE(\varrho)$, \cref{thm:mainSE}\ref{thm:SE-b} proves \cref{thm:fix4}.
\end{proof}

\begin{remark}\label{rem:rays}
By \eqref{eq:nufloor}, the design parameters act on the extreme rays through $\mu_k$ and $\sigma_k$: the corresponding $\nu_k$ increases with $\varrho$ when the ray weights an input constraint ($k\in\Imu$), and otherwise with $\ag$ when it weights \mbox{$\lVert\zeta\rVert_\infty\le\ag$} ($k\in\Isig$).
The remaining rays, those in $\Izero$, are the output pairs \eqref{eq:chiz}, whose $\nu_k$ are positive constants.
The input relaxation $\varrho$ is needed because $\Imu$ also contains rays that weight output constraints against an input constraint with \mbox{$\sigma_k=0$}, such as
\[
\chi^{\times,i}:=\frac{(p^i_+;p^i_-;0;e_i;0;0)}{\sqrt{1+\lVert p^i\rVert_2^2}},\qquad p^i:=(\Hu\T)^{-1}e_i .
\]
Where both constraints are active, $\lambda^{k\top}b$ is independent of $(\alpha_2,\alpha_u,\ag)$; this coupling produces counterexamples such as \cref{prop:fix4nec}, and, for all $(\alpha_1,\alpha_2,\alpha_u)$, the relaxation \mbox{$\varrho\ge\varrho^{\star}$} of \cref{thm:fix4} renders these $\nu_k$ nonnegative.
\end{remark}

In \cref{thm:fix4}, we chose $\ag$ and $\varrho$ by design but the remaining parameters arbitrarily.
The latter may be such that either $\varrho^{\star}$ may be unnecessarily large, or $\SE(\varrho)$ may be small, or both.
We reduce this inefficiency in \S\ref{sec:fix12}.

\subsection{Efficient design of an \texorpdfstring{$\alpha^{2}$}{α²}}\label{sec:fix12}

The design algorithm we propose is summarized in \cref{alg:fix}.
The main idea behind this design is to explore three steps, Steps 1 through 3, with $\varrho$ held at $\varrho_1:=1$ in Steps~1 and~2.
In Step~1, we determine $\alpha_1$, which sets $\Omega$ and $\SE(\varrho)$.
In Step~2, we determine $(\alpha_2,\alpha_u,\ag)$, which only alter the quadratic program \eqref{eq:qp} but do not affect $\Omega$ or $\SE(\varrho)$.
If the SIOCF condition~\eqref{eq:siocfcond} cannot be satisfied in Steps~1 and~2, we proceed to Step~3, where we relax the input constraint by increasing $\varrho$, thereby enlarging $\SE(\varrho)$.
We now describe these three steps, starting with an initialization step.

\begin{algorithm}[t]
\caption{$\alpha^{2}$ design procedure, $\varrho_1:=1$}\label{alg:fix}
\noindent\hspace*{-1.7em}\begin{minipage}{\dimexpr\linewidth+1.7em\relax}
\begin{algorithmic}[1]\makeatletter\def\ALC@lno{}\makeatother
\STATE Compute the extreme rays of $\Lambda$ and $\alpha_1^{\min}$ as in \eqref{eq:a1min}.
\STATE Construct an RPI set $E$ and $\SE(\varrho_1)$.
\STATE \emph{(Step 1)} Compute $\ThH(\alpha_1^{\min},\varrho_1)$ as in \eqref{eq:ThH}; if $\ThH\ge0$, compute $\ThB(\alpha_1,\varrho_1)$ as in \eqref{eq:ThB} over $[\alpha_1^{\min},\bar\alpha_1]$.
\IF{$\ThH\ge0$ and $\ThB(\alpha_1,\varrho_1)>0$ for some $\alpha_1\in[\alpha_1^{\min},\bar\alpha_1]$}
\STATE \emph{(Step 2)} Set $\alpha_1=\alpha_1^{\star}$ as defined in \eqref{eq:a1star}, choose $\varepsilon\in(0,\ThB)\cap\R$ and $c>0$.
\STATE Set $\alpha_2\ge\underline\alpha_2$ with $\alpha_u=c\,\alpha_2$, and $\ag\ge\underline\alpha_g(\alpha_2)$ defined in \eqref{eq:a2min} and \eqref{eq:agmin}.
\ELSE
\STATE \emph{(Step 3)} Choose $\varepsilon\in[0,\nub]$, and set $\ag>\ag^{\star}$ and $\varrho\ge\varrho^{\star}$ defined in \eqref{eq:thresholds}.
\ENDIF
\end{algorithmic}
\end{minipage}
\end{algorithm}

\emph{Initialization.}
We choose
\begin{equation}\label{eq:a1min}
\alpha_1^{\min}:=\inf\big\{\alpha_1>0:\ \psi_1^{\pm}(\xb)\ge0\ \ \forall\xb\in X_0\big\}
\end{equation}
and restrict $\alpha_1$ to $[\alpha_1^{\min},\infty)$, where $X_0$ is the set of all initial values $\xb(t_0)$.
Such a choice ensures that $X_0\subseteq\Omega$.
We then construct an RPI set $E$ using reachability properties of \eqref{eq:etadyn} and a polyhedral or an ellipsoidal approximation \cite[Ch.~6]{BlanchiniMiani15}, \cite[Sec.~3]{AlthoffFrehseGirard21}, and the corresponding $\SE(\varrho_1)$.

We set up a few preliminaries and a Proposition before proceeding to the steps.

Let $\hat b$, $\Psi$, $H$ be the blocks of the decomposition
\begin{equation}\label{eq:bsplit}
b(\xb)=\hat b(\xb)+\alpha_2\,\Psi(\xb)+\alpha_u\,H(\xb)+\ag\,\eM ,
\end{equation}
given by
\begin{align*}
\hat b(\xb)&:=\big(\mp\Cz(\Ab+\alpha_1I)\Ab\xb;\ \pm\Kfull\Ab\xb;\ 0_m;\ 0_m\big),\\
\Psi(\xb)&:=\big(\psi_1^{+}(\xb);\ \psi_1^{-}(\xb);\ 0;\ 0;\ 0;\ 0\big),\\
H(\xb)&:=\big(0;\ 0;\ h_u^{+}(\xb);\ h_u^{-}(\xb);\ 0;\ 0\big) .
\end{align*}
Let
\[
Z_k:=\big\{\xb\in\SE(\varrho):\ \lambda^{k\top}\Psi(\xb)=0,\ \lambda^{k\top}H(\xb)=0\big\} ,
\]
and $H_k:=\{\xb\in Z_k:\ \Cz\Ab\xb=0\}$.
Compute
\begin{equation}\label{eq:ThH}
\ThH(\alpha_1^{\min},\varrho):=\min_{k\in\Kz}\ \inf_{\xb\in H_k}\lambda^{k\top}\hat b(\xb)
\end{equation}
with $Z_k$ and $\hat b$ evaluated on $\SE(\varrho)$ with $\alpha_1=\alpha_1^{\min}$, and $\Kz:=\{k:\ \sigma_k=0\}$ with $\sigma_k$ as in \eqref{eq:sigmu}.
Define
\begin{equation}\label{eq:ThB}
\ThB(\alpha_1,\varrho):=\min_{k\in\Kz}\ \inf_{\xb\in Z_k}\lambda^{k\top}\hat b(\xb).
\end{equation}
Also define $\bar\alpha_1$ as
\begin{equation}\label{eq:a1bar}
\bar\alpha_1:=\sup\{\alpha_1\ge\alpha_1^{\min}:\ \alpha_1\notin A\ \text{ or }\ q(\alpha_1)\ge0\}
\end{equation}
where
\begin{equation}\label{eq:qdef}
q(\alpha_1):=q_2\alpha_1^2+q_1\alpha_1+q_0
\end{equation}
with $q_2<0$, and a set $A\subseteq\R_{++}$ such that
\begin{equation}\label{eq:qbound}
\ThB(\alpha_1,\varrho)\le q(\alpha_1)\qquad\forall\alpha_1\in A .
\end{equation}

We then have the following proposition.
\begin{proposition}\label{prop:ceiling}
For $\varrho$, $\alpha_1^{\min}$, and an RPI set $E$,
\begin{flalign*}
\textnormal{\hypertarget{prop6a}{(a)}} && \sup_{\alpha_2,\alpha_u,\ag>0}\min_k\nu_k(\SE(\varrho))=\ThB(\alpha_1,\varrho)\quad\forall\alpha_1>0 ; &&
\end{flalign*}
\begin{flalign*}
\textnormal{\hypertarget{prop6b}{(b)}} && \ThB(\alpha_1,\varrho)\ \le\ \ThH(\alpha_1^{\min},\varrho)\qquad\forall\alpha_1\in[\alpha_1^{\min},\infty) ; &&
\end{flalign*}
\par\noindent\hangindent=2.2em\hangafter=1\makebox[2.2em][l]{\textnormal{\hypertarget{prop6c}{(c)}}}a quadratic function $q(\alpha_1)$ as in \eqref{eq:qdef} and a set $A$ as in \eqref{eq:qbound} exist;
\par\noindent\hangindent=2.2em\hangafter=1\makebox[2.2em][l]{\textnormal{\hypertarget{prop6d}{(d)}}}if $\ThB(\alpha_1,\varrho)>0$, then $\forall c>0$, $\exists\,\underline\alpha_2\in[0,\infty)$ and a nonincreasing function $\underline\alpha_g$ such that the SIOCF condition is satisfied on $\SE(\varrho)$ with $\alpha_u=c\,\alpha_2$ if and only if $\alpha_2\ge\underline\alpha_2$ and $\ag\ge\underline\alpha_g(\alpha_2)$.
\end{proposition}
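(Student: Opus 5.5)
The plan is to exploit the decomposition \eqref{eq:bsplit} together with sign information on $\SE(\varrho)$. On $\SE(\varrho)\subseteq\Omega$ we have $\psi_1^{\pm}\ge0$ and $h_u^{\pm}\ge0$, so $\Psi(\xb)\ge0$ and $H(\xb)\ge0$ componentwise. Since $\lambda^k\ge0$, this gives $g_k(\xb):=\lambda^{k\top}\Psi(\xb)+c\,\lambda^{k\top}H(\xb)\ge0$ for every $c>0$. Note that $\hat b$ depends only on $\alpha_1$, $\Psi$ depends only on $\alpha_1$, and $\SE(\varrho)$ does not depend on $(\alpha_2,\alpha_u,\ag)$. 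Hence for fixed $\alpha_1,\varrho$,
\[
\nu_k(\SE(\varrho))=\inf_{\xb\in\SE(\varrho)}\big[\lambda^{k\top}\hat b(\xb)+\alpha_2\lambda^{k\top}\Psi(\xb)+\alpha_u\lambda^{k\top}H(\xb)\big]+\sigma_k\ag .
\]

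\emph{Part (a).} For $k\notin\Kz$ we have $\sigma_k>0$, so $\nu_k\to\infty$ as $\ag\to\infty$; these rays therefore do not constrain the supremum. For $k\in\Kz$ the proof has two directions.
\begin{itemize}[leftmargin=1.5em]
\item Upper bound: restricting the infimum to $Z_k$ kills the $\alpha_2$ and $\alpha_u$ terms, so $\nu_k\le\inf_{Z_k}\lambda^{k\top}\hat b$ for all parameters. This gives $\sup\min_k\nu_k\le\ThB$, with the convention that an empty $Z_k$ contributes $+\infty$.
\item Matching lower bound: take $\alpha_u=c\alpha_2$ and $\alpha_2\to\infty$, and use a penalty argument on the compact set $\SE(\varrho)$. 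Let $\xb_t$ be minimizers of $\lambda^{k\top}\hat b+t\,g_k$. The values are bounded above (by $\ThB$ if $Z_k\neq\varnothing$), so $t\,g_k(\xb_t)$ stays bounded and $g_k(\xb_t)\to0$. Any limit point then lies in $\{g_k=0\}=Z_k$, because both summands of $g_k$ are nonnegative. Lower semicontinuity of the affine function $\lambda^{k\top}\hat b$ then yields $\liminf\ge\inf_{Z_k}\lambda^{k\top}\hat b$. If $Z_k=\varnothing$, compactness gives $\min g_k>0$, so the value tends to $+\infty$.
\end{itemize}

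\emph{Part (b).} This is monotonicity in $\alpha_1$. On $\Omega$ we have $h_1^{\pm}\ge0$, so $\psi_1^{\pm}=\alpha_1h_1^{\pm}\mp\Cz\Ab\xb$ is nondecreasing in $\alpha_1$. Hence $\SE(\varrho)$ computed at $\alpha_1^{\min}$ is contained in $\SE(\varrho)$ computed at any $\alpha_1\ge\alpha_1^{\min}$, with $E$ held fixed. On $H_k$ we have $\Cz\Ab\xb=0$, so there $\psi_1^{\pm}=\alpha_1 h_1^{\pm}$. Thus the condition $\lambda^{k\top}\Psi=0$ reduces to $h_1$ vanishing on the support of $\lambda^k_{z\pm}$, which does not depend on $\alpha_1$. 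Likewise $\lambda^{k\top}\hat b$ reduces to $\mp\Cz\Ab^2\xb$ terms, which do not depend on $\alpha_1$. Therefore $H_k(\alpha_1^{\min})\subseteq Z_k(\alpha_1)$ with identical objective values, and so $\ThB(\alpha_1,\varrho)\le\ThH(\alpha_1^{\min},\varrho)$.

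\emph{Part (c).} This is the step I expect to be hardest. I would construct, for a suitable ray $k\in\Kz$ (for instance the mixed ray $\chi^{\times,i}$ of \cref{rem:rays}), a family $\xb(\alpha_1)\in Z_k$ whose velocity scales with $\alpha_1$. The idea is to sit on a face where the weighted $\psi_1$ and $h_u$ vanish, but with $h_1>0$, so that $\dot z=\Cz\Ab\xb$ is of order $\alpha_1 h_1$, which is allowed since $|\dot z|\le\alpha_1 w_z$. Then the term $-\alpha_1\Cz\Ab\xb$ in $\hat b$ contributes $-\alpha_1^2\,\kappa$ with $\kappa>0$, while $-\Cz\Ab^2\xb$ stays affinely bounded in $\alpha_1$. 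This gives $\lambda^{k\top}\hat b(\xb(\alpha_1))\le q_2\alpha_1^2+q_1\alpha_1+q_0$ with $q_2<0$. The set $A$ is then the set of $\alpha_1$ for which this family is admissible, i.e.\ lies in $\SE(\varrho)$, including $\eta\in E$. The difficulty is guaranteeing membership in $E$ and in the input box simultaneously. I would first try to fix the $\eta$-coordinate and use the nonsingularity of $\Tz$ (\cref{lem:normalform}\ref{lem:nf-e}) to pick $(z,\dot z)$ freely.

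\emph{Part (d).} Fix $\alpha_1$ and $c$, and let $\phi_k(\alpha_2):=\inf_{\SE(\varrho)}[\lambda^{k\top}\hat b+\alpha_2 g_k]$. As an infimum of affine functions with nonnegative slopes, $\phi_k$ is concave, continuous and nondecreasing in $\alpha_2$.
\begin{itemize}[leftmargin=1.5em]
\item Rays $k\in\Kz$: by (a), $\phi_k(\alpha_2)\to\inf_{Z_k}\lambda^{k\top}\hat b\ge\ThB>0$. Hence $\{\alpha_2:\phi_k\ge0\}=[\underline\alpha_{2,k},\infty)$, and this interval is closed by continuity. Set $\underline\alpha_2:=\max_{k\in\Kz}\underline\alpha_{2,k}$, with $\underline\alpha_{2,k}:=0$ when $\phi_k$ is already nonnegative at $0$.
\item Rays with $\sigma_k>0$: the condition $\phi_k(\alpha_2)+\sigma_k\ag\ge0$ is equivalent to $\ag\ge-\phi_k(\alpha_2)/\sigma_k$. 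Define $\underline\alpha_g(\alpha_2):=\max\{0,\max_{\sigma_k>0}-\phi_k(\alpha_2)/\sigma_k\}$, which is nonincreasing because each $\phi_k$ is nondecreasing.
\end{itemize}
The two families of conditions are decoupled, so the SIOCF condition \eqref{eq:siocfcond} holds exactly when $\alpha_2\ge\underline\alpha_2$ and $\ag\ge\underline\alpha_g(\alpha_2)$.
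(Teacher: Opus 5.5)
Your parts (a) and (d) are correct and take a somewhat different route from the paper. The paper fixes $\varepsilon<\ThB$, shows $s_k>0$ on the compact set $\{\xb\in\SE(\varrho):\lambda^{k\top}\hat b(\xb)\le\varepsilon\}$, and obtains the closed-form thresholds \eqref{eq:a2min}--\eqref{eq:agmin}, from which both (a) and (d) follow. You instead show, ray by ray, that the penalized value $\phi_k(\alpha_2)$ increases to $\inf_{Z_k}\lambda^{k\top}\hat b$, and then read off (d) from the continuity and monotonicity of $\phi_k$. (The upper bound on the penalized values is this ray-wise infimum, not $\ThB$.) Your route is cleaner; the paper's gives the explicit bounds used in Step~2 of \cref{alg:fix}.

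In (b) you hold $E$ fixed. That is not the setting in which $\ThB(\alpha_1,\varrho)$ is used: Step~1 recomputes $E$ for each $\alpha_1$, and in general no fixed compact $E$ is RPI for all the sets $\W$, which scale with $\alpha_1$. The missing line is $E(\alpha_1^{\min})\subseteq E(\alpha_1)$. The paper takes it from the reachable-set construction \eqref{eq:Eset}, since $\W$ grows with $\alpha_1$. For an arbitrary RPI set chosen at each $\alpha_1$, the inclusion $H_k\subseteq Z_k(\alpha_1)$ is not guaranteed.

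The genuine gap is (c). You leave it as a plan, and the device you propose, freezing the $\eta$-coordinate and choosing $(z,\dot z)$ through $\Tz$, cannot produce the negative quadratic. Take $k\in\Kz$ with $Z_k\neq\varnothing$. Then both $I_k^{+}\cup I_k^{-}$ and $J_k:=J_k^{+}\cup J_k^{-}$ are nonempty, with $I_k^{+}\cap I_k^{-}=J_k^{+}\cap J_k^{-}=\varnothing$. Moreover $\sigma_k=0$ together with $A_\zeta\T\lambda^k=0$ gives $\Hu\T p=-q$, where $p:=\lambda^k_{z+}-\lambda^k_{z-}$ and $q:=\lambda^k_{u+}-\lambda^k_{u-}$. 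Combining $\dot z=\Clim A_px_p+\Hu u$ with the face equalities $\psi_{1,i}^{\pm}=0$ ($i\in I_k^{\pm}$) and $h_{u,j}^{\pm}=0$ ($j\in J_k^{\pm}$) yields, on $Z_k$,
\[
\alpha_1\big(\lambda^{k\top}_{z+}h_1^{+}(\xb)+\lambda^{k\top}_{z-}h_1^{-}(\xb)\big)=p\T\dot z=p\T\Clim A_px_p-\varrho\,\mu_k .
\]
With $\eta$ frozen and $z\in\Z$, $x_p$ is bounded, so the right-hand side is $O(1)$. Hence the term $-\alpha_1^2(\lambda^{k\top}_{z+}h_1^{+}+\lambda^{k\top}_{z-}h_1^{-})$ you rely on is only $O(\alpha_1)$. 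Put differently, keeping $h_1$ of order one while $\dot z$ grows like $\alpha_1$ pushes the pinned inputs $u_j$, $j\in J_k$, off their bounds. \cref{prop:fix4nec} is exactly this case: there is no $\eta$, $Z_k=\{\xb^\ast\}$, and the face value $\tfrac{\alpha_1}{\sqrt2}(\umax-a\zmax)$ is linear in $\alpha_1$.

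The missing idea is to let $\eta$ move at rate $\alpha_1$. The paper uses $\xb=x_0+\alpha_1x_1$ with $\Cz x_1=0$, $\Cz\Ab x_1=\dot z^\circ$ and $(\Kfull)_{J_k}x_1=0$, as in \eqref{eq:x01}. This keeps $z$ at the far end of the box ($h_1=w_z$ on $I_k^{\pm}$) and the pinned inputs fixed while $\dot z$ grows. The compensation comes from the $n_p-m$ coordinates of $\eta$, and the system is solvable generically when $\lvert J_k\rvert\le n_p-m$. Then $\lambda^{k\top}\hat b$ equals $q_2\alpha_1^2+q_1\alpha_1+q_0$ exactly along the family, with $q_2=d_z\T\dot z^\circ<0$. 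The membership difficulty you flag is not something to be proved: $A$ in \eqref{eq:Aset} is simply defined as the set of $\alpha_1$ for which $x_0+\alpha_1x_1\in\SE(\varrho)$.
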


\Cref{prop:ceiling} guides the choice of $\alpha_1$ as well as $(\alpha_2,\alpha_u,\ag)$. \Cref{prop:ceiling}\hyperlink{prop6a}{(a)} implies that $(\alpha_2,\alpha_u,\ag)$ exist that satisfy the  SIOCF condition  for a given $\alpha_1$ if $\ThB(\alpha_1,\varrho)>0$ and do not exist if $\ThB(\alpha_1,\varrho)<0$. \hyperlink{prop6b}{(b)} implies that such an $\alpha_1\ge\alpha_1^{\min}$ cannot exist when $\ThH<0$, and \hyperlink{prop6c}{(c)} implies that such an $\alpha_1$ cannot exceed $\bar\alpha_1$. \hyperlink{prop6d}{(d)} finally says that for the chosen $\alpha_1$,  thresholds for $(\alpha_2,\alpha_u,\ag)$ can be suitably determined so that the SIOCF condition holds.

We now proceed with the description of the three steps.

\noindent \emph{Step 1.}
If $\ThH<0$, proceed to Step~3.
Otherwise compute $\ThB(\alpha_1,\varrho_1)$ over $\alpha_1\in[\alpha_1^{\min},\bar\alpha_1]$, recomputing $E$ and $\SE(\varrho_1)$.
If $\ThB(\alpha_1,\varrho_1)>0$ for some $\alpha_1$, set $\alpha_1=\alpha_1^{\star}$, where
\begin{equation}\label{eq:a1star}
\alpha_1^{\star}:=\max\{\alpha_1\in[\alpha_1^{\min},\bar\alpha_1]:\ \ThB(\alpha_1,\varrho_1)>0\} ;
\end{equation}
and proceed to Step~2; otherwise, proceed to Step~3.
Since $\Omega$ and $E$ both grow as $\alpha_1$ increases, so does $\SE(\varrho_1)$; hence $\alpha_1^{\star}$ in \eqref{eq:a1star} yields the largest $\SE(\varrho_1)$ among all $\alpha_1$ with $\ThB(\alpha_1,\varrho_1)>0$.
If $\bar\alpha_1=\infty$, the maximum in \eqref{eq:a1star} need not exist, in which case, any $\alpha_1$ with $\ThB(\alpha_1,\varrho_1)>0$ is chosen.

\noindent \emph{Step 2.}
Choose $\varepsilon\in(0,\ThB)\cap\R$ and $c>0$, and set $\alpha_2\ge\underline\alpha_2$ with $\alpha_u=c\,\alpha_2$ and $\ag\ge\underline\alpha_g(\alpha_2)$, with $\underline\alpha_2$ and $\underline\alpha_g$ defined as in \eqref{eq:a2min} and \eqref{eq:agmin}.

\noindent\emph{Step 3.}
With any $\alpha_1\ge\alpha_1^{\min}$ and $(\alpha_2,\alpha_u)$,
followed by the construction of $E$, choose $\varepsilon\in[0,\nub]$, and set $\ag>\ag^{\star}$ and $\varrho\ge\varrho^{\star}$, where the lower bounds are as in \eqref{eq:thresholds}.

If Step~2 is reached, \cref{prop:ceiling}\hyperlink{prop6d}{(d)} yields the SIOCF condition on $\SE(\varrho_1)$; if Step~3 is reached, \cref{thm:fix4} yields it on $\SE(\varrho)$.
For a polyhedral approximation of $E$ and a precomputed set of extreme rays, evaluation of the SIOCF condition requires $N$ independent linear programs.

\section{Conclusions}\label{sec:concl}
This paper provides necessary and sufficient conditions for I-O constraints to be satisfied simultaneously in a class of closed-loop LTI systems with integral action, for command tracking, using a CBF-based governor that provides a suitably modified command. These conditions also imply that for some  MIMO plants with state feedback and integral action, it may be impossible to ensure I-O constraint satisfaction. It is also shown in this paper that  the SIOCF condition can always be satisfied
by expanding the design space where the input constraints are relaxed.
A systematic design of the governor parameters is proposed that determines a near-minimal relaxation of the input constraints. Overall,
this paper provides an I-O CBF Governor for modifying the command for a class of LTI plants that guarantees that necessary and sufficient conditions are always met in ensuring a simultaneous satisfaction of I-O constraints.

\bibliographystyle{IEEEtran}
\bibliography{refs}

\appendix
\section{}

Throughout the Appendix, $b(\xb)=(b_{z+};b_{z-};b_{u+};b_{u-};b_{M+};b_{M-})$ is partitioned conformally with the six blocks of $A_\zeta$ in \eqref{eq:qpdata}.
For $a\in\R^{m}$, $a_+$ and $a_-$ denote its componentwise positive and negative parts, $(a_+)_i:=\max\{a_i,0\}$ and $(a_-)_i:=\max\{-a_i,0\}$, so that $a=a_+-a_-$, and $\lvert a\rvert$ is taken componentwise.
For $\lambda\in\R^{6m}$, $\operatorname{supp}\lambda$ denotes the set of indices of its nonzero entries, and for a set $S$ of such indices, $A_S$ denotes the submatrix of $A_\zeta$ formed by the rows with indices in $S$ and $\lambda|_S$ the corresponding entries of $\lambda$.

\subsection{Proof of \texorpdfstring{\cref{prop:cert}}{Proposition \ref{prop:cert}}}

\begin{proof}
We use Farkas' lemma \cite{Schrijver86}: for $A\in\R^{p\times n}$ and $b\in\R^{p}$,
\begin{gather*}
\{\zeta\in\R^{n}:\ A\zeta\le b\}\neq\varnothing\\
\iff\ \lambda\T b\ge0\quad\forall\lambda\in\{\lambda\in\R^{p}:\ \lambda\ge0,\ A\T\lambda=0\} .
\end{gather*}
With $A=A_\zeta$ and $b=b(\xb)$, the set on the left is $F(\xb)$ by \eqref{eq:qpF} and the set on the right is $\Lambda$ by \eqref{eq:cone}, and the lemma directly yields
\[
F(\xb)\neq\varnothing
\iff
\lambda\T b(\xb)\ge0\quad\forall\lambda\in\Lambda .
\]

\emph{($\Rightarrow$)} Let $\xb\in\Sset$ and $\zeta\in F(\xb)$, that is, $A_\zeta\zeta\le b(\xb)$ by \eqref{eq:qpF}.
For every $k$, since $\lambda^k\ge0$ and $A_\zeta\T\lambda^k=0$ by \eqref{eq:cone},
\[
\lambda^{k\top}b(\xb)\ \ge\ \lambda^{k\top}A_\zeta\zeta\ =\ (A_\zeta\T\lambda^k)\T\zeta\ =\ 0 .
\]
Since $\xb\in\Sset$ is arbitrary, taking the infimum over $\Sset$ yields
\[
\nu_k(\Sset)=\inf_{\xb\in\Sset}\lambda^{k\top}b(\xb)\ \ge\ 0\qquad\forall k=1,\dots,N ,
\]
that is, \eqref{eq:siocfcond}.

\emph{($\Leftarrow$)} By contraposition.
Suppose $F(\xb)=\varnothing$ for some $\xb\in\Sset$.
By Farkas' lemma, there exists $\lambda\in\Lambda$ with $\lambda\T b(\xb)<0$.
By \eqref{eq:rays}, $\lambda=\sum_{k=1}^{N}c_k\lambda^k$ with $c_k\ge0$, and hence
\[
0\ >\ \lambda\T b(\xb)\ =\ \sum_{k=1}^{N}c_k\,\lambda^{k\top}b(\xb) ,
\]
which implies $\lambda^{k\top}b(\xb)<0$ for some $k$ with $c_k>0$.
Therefore
\[
\nu_k(\Sset)=\inf_{\xb'\in\Sset}\lambda^{k\top}b(\xb')\ \le\ \lambda^{k\top}b(\xb)\ <\ 0 ,
\]
which violates \eqref{eq:siocfcond}.
\end{proof}

\subsection{Proof of the bound on $N$ in \texorpdfstring{\cref{prop:cert}}{Proposition \ref{prop:cert}}}\label{app:Nbound}

\begin{proof}
For $\lambda\in\Lambda$ let $S:=\operatorname{supp}\lambda$, so that $\lambda|_S>0$.
Since $\lambda$ vanishes outside $S$, $0=A_\zeta\T\lambda=A_S\T(\lambda|_S)$, that is, $\lambda|_S\in\ker(A_S\T)$.

\emph{Step 1.} We show that, for a nonzero $\lambda\in\Lambda$,
\begin{equation}\label{eq:extray}
\lambda\ \text{spans an extreme ray of }\Lambda\iff\dim\ker(A_S\T)=1 .
\end{equation}
If $\dim\ker(A_S\T)\ge2$, choose $w\in\ker(A_S\T)$ independent of $\lambda|_S$; since $\lambda|_S>0$, for small $\epsilon>0$ the vectors supported on $S$ with values $\lambda|_S\pm\epsilon w$ lie in $\Lambda$, and $\lambda$ is their midpoint.
Conversely, if $\dim\ker(A_S\T)=1$ and $\lambda=\lambda'+\lambda''$ with $\lambda',\lambda''\in\Lambda$, both summands vanish off $S$, and their restrictions lie in $\ker(A_S\T)$ and are multiples of $\lambda$.
Hence $\operatorname{rank}A_S=\lvert S\rvert-1\le m$, and by \textnormal{\ref{ass:A1}} and \textnormal{\ref{ass:A3}} $\Hu K_I$ and $K_I$ are nonsingular, which excludes $\lvert S\rvert=1$.
Therefore $2\le\lvert S\rvert\le m+1$.

\emph{Step 2.} The $6m$ columns of $A_\zeta\T$ form $3m$ antipodal pairs $\{a_\ell,-a_\ell\}$, $\ell=1,\dots,3m$, where $a_\ell$ ranges over the columns of $(\Hu K_I)\T$, $K_I\T$ and $I_m$.
By \arxref{eq:extray}, each pair supports an extreme ray, its kernel being spanned by $[1\ \ 1]\T$, and a support containing both indices of a pair is exactly that pair, since otherwise $\lambda|_S$ and the vector equal to $[1\ \ 1]\T$ on the pair and $0$ elsewhere are two independent elements of $\ker(A_S\T)$.
With $\lVert\lambda^k\rVert_2=1$ fixing the scalar, the pairs contribute exactly $3m$ extreme rays.
Six more families are available for every plant and gain, two for each of the block pairs $(\Hu K_I,K_I)$, $(\Hu K_I,I_m)$ and $(K_I,I_m)$, obtained by expanding one column of the second block of $A_\zeta\T$ in the columns of the first.
For $i=1,\dots,m$ let
\begin{gather*}
p^{i}:=(\Hu\T)^{-1}e_i,\qquad q^{i}:=\big((\Hu K_I)\T\big)^{-1}e_i,\\
r^{i}:=(K_I\T)^{-1}e_i ,
\end{gather*}
with $p^{i}_{\pm}$, $q^{i}_{\pm}$, $r^{i}_{\pm}$ their componentwise positive and negative parts, and take the nonnegative vectors supported as
\begin{gather*}
(p^{i}_{+};p^{i}_{-};0;e_i;0;0),\qquad (q^{i}_{+};q^{i}_{-};0;0;0;e_i),\\
(0;0;r^{i}_{+};r^{i}_{-};0;e_i),
\end{gather*}
together with their mirror images $(p^{i}_{-};p^{i}_{+};e_i;0;0;0)$, $(q^{i}_{-};q^{i}_{+};0;0;e_i;0)$ and $(0;0;r^{i}_{-};r^{i}_{+};e_i;0)$.
By the blocks of $A_\zeta$ in \eqref{eq:qpdata} all six lie in $\Lambda$, since $(\Hu K_I)\T p^{i}=K_I\T e_i$, $(\Hu K_I)\T q^{i}=e_i$ and $K_I\T r^{i}=e_i$; each identity is the only dependence among the columns of $A_\zeta\T$ that the vector selects, since the columns of each block form a basis of $\R^m$ and the expansion of a vector in a basis is unique; hence all of them are extreme by \arxref{eq:extray}.
Each support meets exactly two of the three blocks of \eqref{eq:qpdata}, the second of them at the single index $i$ with one sign.
The six families are therefore distinct from the pairs, from one another, and within each family across $i$ and across the mirror; hence $N\ge3m+6m=9m$.

\emph{Step 3.} Every other support $S$ meets each pair at most once and is therefore of the form $S=\{(\ell,\varsigma_\ell)\}_{\ell\in L}$ for some $L\subseteq\{1,\dots,3m\}$ with $2\le\lvert L\rvert\le m+1$ and some $\varsigma:L\to\{+,-\}$.
By \arxref{eq:extray}, $\operatorname{rank}\{a_\ell\}_{\ell\in L}=\lvert L\rvert-1$, a condition on $L$ alone, and the dependence $\sum_{\ell\in L}c_\ell a_\ell=0$ is then unique up to a nonzero scalar, with every $c_\ell\neq0$; hence at most two signings, $\pm\operatorname{sign}(c)$, admit a positive kernel vector.
No such $L$ contains another, since a dependence supported on a proper subset would be independent of the one supported on $L$.
Counting the pairs $(\pi,L)$ with $\pi$ a permutation of $\{1,\dots,3m\}$ whose first $\lvert L\rvert$ entries enumerate $L$, each $L$ occurs in $\lvert L\rvert!\,(3m-\lvert L\rvert)!$ pairs and each $\pi$ at most once, which yields $\sum_L\binom{3m}{\lvert L\rvert}^{-1}\le1$.
Since $2\le\lvert L\rvert\le m+1\le\tfrac{3m}{2}$ for $m\ge2$, and $\lvert L\rvert=m+1$ is forced when $m=1$, each binomial coefficient is at most $\binom{3m}{m+1}$; hence there are at most $\binom{3m}{m+1}$ such sets $L$, each giving at most two extreme rays, and $N\le3m+2\binom{3m}{m+1}$.
\end{proof}

\subsection{Proof of \texorpdfstring{\cref{prop:escape}}{Proposition \ref{prop:escape}}}

\begin{proof}
We first show that the condition
\begin{equation}\label{eq:vanish}
\Cz\Ab^{2}v=0\quad\text{and}\quad\Kfull\Ab v=0
\qquad\forall v\in\ker Q
\end{equation}
is equivalent to $\ker Q=\Oh$.
Every $v\in\ker Q$ satisfies $\Cz v=\Cz\Ab v=\Kfull v=0$ by the definition of $Q$ in \eqref{eq:omegamat}.

($\Rightarrow$) Assume \eqref{eq:vanish} and let $v\in\ker Q$.
Then $\Cz(\Ab v)=0$, $\Cz\Ab(\Ab v)=0$ and $\Kfull(\Ab v)=0$, that is, $\Ab v\in\ker Q$.
By induction, $\Ab^{k}v\in\ker Q$ for all $k\ge0$.
Since $w\in\ker Q$ implies $w\in\ker\Cbar$, $\Cbar\Ab^{k}v=0$ for all $k\ge0$, that is, $v\in\Oh$ by \arxref{eq:unobs}, and $\ker Q\subseteq\Oh$.
Conversely, let $v\in\Oh$.
The conditions $\Cbar v=0$ and $\Cbar\Ab v=0$ contain $\Cz v=0$, $\Kfull v=0$ and $\Cz\Ab v=0$, that is, $v\in\ker Q$.
Hence $\Oh\subseteq\ker Q$, and therefore $\ker Q=\Oh$.

($\Leftarrow$) Assume $\ker Q=\Oh$ and let $v\in\ker Q$.
Then $v\in\Oh$, and the conditions $\Cbar\Ab^{2}v=0$ and $\Cbar\Ab v=0$ contain $\Cz\Ab^{2}v=0$ and $\Kfull\Ab v=0$, which is \eqref{eq:vanish}.

Suppose $\ker Q\neq\Oh$. Then $\Cz\Ab^{2}v\neq0$ or $\Kfull\Ab v\neq0$ for some $v\in\ker Q$.

\emph{Case (i): $\Cz\Ab^{2}v\neq0$.}

\emph{Step 1.} Define $R:=\Cz\Ab$, $g:=\Cz\Ab^{2}v$, and
\[
\beta(\xb):=g\T R\,\xb,\qquad \gamma:=g\T R\Ab v=\lVert\Cz\Ab^{2}v\rVert^{2}>0 .
\]
Since $v\in\ker Q$, $Rv=\Cz\Ab v=0$.
On $\Omega$, the inequalities $\psi_1^{\pm}\ge0$ and $h_1^{\pm}\ge0$ in \eqref{eq:omega} imply, by \eqref{eq:psi1},
\begin{equation}\label{eq:Rbound}
\lvert R\xb\rvert\le\rho:=\alpha_1(\zmax-\zmin) ,
\end{equation}
Define $B:=\lvert g\rvert\T\rho$.
Then, for every $\xb\in\Omega$,
\begin{equation}\label{eq:betaB}
\lvert\beta(\xb)\rvert=\lvert g\T R\xb\rvert\le\lvert g\rvert\T\lvert R\xb\rvert\le B .
\end{equation}

\emph{Step 2.} Without loss of generality $t_0=0$.
Let $\xb_0:=rv$ with $r>0$.
Since $v\in\ker Q$, $\xb_0\in\Omega$.
Every solution exists globally and is given by
\[
\xb(t)=e^{\Ab t}rv+\int_0^{t}e^{\Ab(t-\tau)}\By\,\zeta(\tau)\,d\tau .
\]
Substituting it into $\beta$ splits $\beta(\xb(t))=r\,f(t)+\varphi(t)$ into the response to $\xb_0$ and to $\zeta$, where
\begin{gather*}
f(t):=g\T Re^{\Ab t}v ,\qquad
\varphi(t):=g\T R\int_0^{t}e^{\Ab(t-\tau)}\By\,\zeta(\tau)\,d\tau .
\end{gather*}
Define
\[
c:=\sqrt m\,\lVert R\T g\rVert\,\lVert\By\rVert\max_{t\in[0,\bar\tau]}\lVert e^{\Ab t}\rVert .
\]
From $\lVert\zeta\rVert_\infty\le M$ and $\lVert\zeta\rVert\le\sqrt m\,\lVert\zeta\rVert_\infty$,
\[
\lvert\varphi(t)\rvert\le t\,cM\qquad\forall t\in[0,\bar\tau] .
\]
From $Rv=0$, $f(0)=0$ and $f'(0)=g\T R\Ab v=\gamma$.
Moreover, $f''(t)=g\T R\Ab^{2}e^{\Ab t}v$ is bounded on $[0,\bar\tau]$ by $\kappa:=\max_{t\in[0,\bar\tau]}\lvert g\T R\Ab^{2}e^{\Ab t}v\rvert$.
Taylor's theorem then yields
\begin{equation}\label{eq:ftaylor}
f(t)\ge\gamma t-\tfrac{\kappa}{2}t^{2}\qquad\forall t\in[0,\bar\tau] .
\end{equation}

\emph{Step 3.} Define $t^\ast:=\min\{\bar\tau,\ \gamma/\kappa\}$, with $\gamma/\kappa:=\infty$ if $\kappa=0$, and $r^\ast:=2(B+cMt^\ast)/(\gamma t^\ast)$.
Since $t^\ast\le\gamma/\kappa$, \arxref{eq:ftaylor} yields
\[
f(t^\ast)\ge\gamma t^\ast-\tfrac{\kappa}{2}t^{\ast2}\ge\tfrac{\gamma}{2}t^\ast .
\]
For every $r>r^\ast$ and every piecewise continuous $\zeta$ with $\lVert\zeta\rVert_\infty\le M$, it follows that
\[
\beta(\xb(t^\ast))\ \ge\ r\tfrac{\gamma}{2}t^\ast-cMt^\ast\ >\ B ,
\]
and therefore $\xb(t^\ast)\notin\Omega$ by \arxref{eq:betaB}.

\emph{Case (ii): $\Kfull\Ab v\neq0$.}
The proof is the same, with $\Cz\Ab$ replaced by $\Kfull$ and $\rho$ in \arxref{eq:Rbound} by $\umax-\umin$.
\end{proof}

\subsection{Proof of \texorpdfstring{\cref{thm:fin}}{Proposition \ref{thm:fin}}}

\begin{proof}
Define
\[
p:=\lambda_{z+}-\lambda_{z-},\quad q:=\lambda_{u+}-\lambda_{u-},\quad s:=\lambda_{M+}-\lambda_{M-} ,
\]
so that, by \eqref{eq:qpdata},
\begin{equation}\label{eq:AzetaT}
A_\zeta\T\lambda=(\Hu K_I)\T p+K_I\T q+s .
\end{equation}
The map $\lambda\mapsto(p,q)$ is a surjection of $\Lambda$ onto $\R^{2m}$.
Given $(p,q)\in\R^{2m}$, define
\[
s:=-\big((\Hu K_I)\T p+K_I\T q\big),\ \lambda:=(p_+;p_-;q_+;q_-;s_+;s_-) .
\]
Then $\lambda\ge0$ and $A_\zeta\T\lambda=0$ by \eqref{eq:AzetaT}, that is, $\lambda\in\Lambda$.

Decompose $\xb=x_r+v$ with $x_r\in(\ker Q)^{\perp}=\operatorname{row}(Q)$ and $v\in\ker Q$, so that $\lambda^{k\top}b(\xb)=\lambda^{k\top}(\Bz x_r+b_0)+\lambda^{k\top}\Bz v$.
Then
\begin{equation}\label{eq:finiteness}
\nu_k\big(\Omega\big)>-\infty \iff \lambda^{k\top}\Bz v=0\quad\forall\,v\in\ker Q .
\end{equation}
By \eqref{eq:finiteness}, it suffices to show that $\lambda^{k\top}\Bz v=0$ for every $k$ and every $v\in\ker Q$ if and only if $\ker Q=\Oh$.
Let $v\in\ker Q$.
By \eqref{eq:qpdata}, for every $(\alpha_1,\alpha_2,\alpha_u,\ag)$,
\[
\Bz v=(-\Cz\Ab^{2}v;\ \Cz\Ab^{2}v;\ \Kfull\Ab v;\ -\Kfull\Ab v;\ 0;\ 0) ,
\]
so that, by the definitions of $p$ and $q$,
\begin{equation}\label{eq:lamB}
\lambda\T\Bz v=-p\T\Cz\Ab^{2}v+q\T\Kfull\Ab v .
\end{equation}

\emph{($\Rightarrow$)} Suppose $\nu_k(\Omega)>-\infty$ for every $k$.
By \eqref{eq:finiteness}, $\lambda^{k\top}\Bz v=0$ for every $k$ and every $v\in\ker Q$.
Since every $\lambda\in\Lambda$ is a nonnegative combination of the extreme rays by \eqref{eq:rays},
\[
\lambda\T\Bz v=0\qquad\forall\lambda\in\Lambda,\ \forall v\in\ker Q .
\]
By \eqref{eq:lamB} and the surjectivity of $\lambda\mapsto(p,q)$,
\[
-p\T\Cz\Ab^{2}v+q\T\Kfull\Ab v=0\qquad\forall(p,q)\in\R^{2m} .
\]
Taking $q=0$ and $p=\Cz\Ab^{2}v$ yields $\Cz\Ab^{2}v=0$, and taking $p=0$ and $q=\Kfull\Ab v$ yields $\Kfull\Ab v=0$.
Since $v\in\ker Q$ is arbitrary, \eqref{eq:vanish} is satisfied, that is, $\ker Q=\Oh$.

\emph{($\Leftarrow$)} Suppose $\ker Q=\Oh$.
Then \eqref{eq:vanish} is satisfied, and \eqref{eq:lamB} yields $\lambda^{k\top}\Bz v=0$ for every $k$ and every $v\in\ker Q$.
By \eqref{eq:finiteness}, $\nu_k(\Omega)>-\infty$ for every $k$.
\end{proof}

\subsection{Proof of \texorpdfstring{\cref{prop:kerQO}}{Proposition \ref{prop:kerQO}}}

\begin{proof}
Split $v=\begin{bsmallmatrix}v_p\\ v_e\end{bsmallmatrix}$ with $v_p\in\R^{n_p}$, $v_e\in\R^{m}$, and let $\phi(w):=\begin{bsmallmatrix}w\\ -K_I^{-1}K_Pw\end{bsmallmatrix}$ for $w\in\R^{n_p}$, a linear injection.

\emph{Step 1.} First, we will show that
\begin{equation}\label{eq:kerQphi}
\ker Q=\phi(\operatorname{span}V) .
\end{equation}
From \eqref{eq:loop}, $\Cz\Ab+\Hu\Kfull=[\,\Clim A_p\ \ 0\,]$.
With the nonsingular $T_H:=\begin{bsmallmatrix}I&0&0\\ 0&I&\Hu\\ 0&0&I\end{bsmallmatrix}$, this identity yields
\[
T_HQ=\begin{bmatrix}\Clim&0\\ \Clim A_p&0\\ K_P&K_I\end{bmatrix},\qquad \ker Q=\ker(T_HQ) ,
\]
and $v\in\ker Q$ if and only if
\[
\Clim v_p=0,\qquad \Clim A_pv_p=0,\qquad K_Pv_p+K_Iv_e=0 .
\]
The first two conditions are equivalent to $v_p\in\operatorname{span}V$.
Under \textnormal{\ref{ass:A3}}, $K_I$ is nonsingular: if $K_Iw=0$ with $w\neq0$, then $\Ab\begin{bsmallmatrix}0\\ w\end{bsmallmatrix}=0$ by \eqref{eq:loop}, which contradicts $\Ab$ being Hurwitz.
The third condition is equivalent to $v_e=-K_I^{-1}K_Pv_p$, which proves \arxref{eq:kerQphi}.

\emph{Step 2.} Next, we will show that
\begin{equation}\label{eq:Ophi}
\Oh=\phi(\Nt) ,
\end{equation}
where $\Nt:=\{w\in\R^{n_p}:\ C_NA_p^{k}w=0\ \forall k\ge0\}$ with $C_N:=\begin{bsmallmatrix}\Clim\\ K_PA_p+K_I\Creg\end{bsmallmatrix}$, which is the unobservable subspace of the pair $(A_p,C_N)$.
To prove \arxref{eq:Ophi}, we first show that, for $k\ge0$,
\begin{equation}\label{eq:rec}
\begin{gathered}
\Kfull\Ab^{i}v=0,\ \ i=0,\dots,k\ \Longrightarrow\\
\Ab^{j}v=\begin{bmatrix}A_p^{j}v_p\\ \Creg A_p^{j-1}v_p\end{bmatrix},\ \ j=1,\dots,k+1 .
\end{gathered}
\end{equation}
Assume that $\Kfull\Ab^{i}v=0$ for $i=0,\dots,k$.
Then \eqref{eq:loop} and $\Kfull v=0$ yield
\[
\Ab v=\begin{bmatrix}A_pv_p\\ \Creg v_p\end{bmatrix}-\begin{bmatrix}B_p\\ \Dreg\end{bmatrix}\Kfull v=\begin{bmatrix}A_pv_p\\ \Creg v_p\end{bmatrix} ,
\]
which is the claim for $j=1$.
If $\Ab^{j}v=\begin{bsmallmatrix}A_p^{j}v_p\\ \Creg A_p^{j-1}v_p\end{bsmallmatrix}$ for some $j\in\{1,\dots,k\}$, then \eqref{eq:loop} and $\Kfull\Ab^{j}v=0$ yield
\[
\Ab^{j+1}v=\begin{bmatrix}A_p^{j+1}v_p\\ \Creg A_p^{j}v_p\end{bmatrix}-\begin{bmatrix}B_p\\ \Dreg\end{bmatrix}\Kfull\Ab^{j}v=\begin{bmatrix}A_p^{j+1}v_p\\ \Creg A_p^{j}v_p\end{bmatrix} ,
\]
which is the claim for $j+1$.
By induction on $j$, this proves \arxref{eq:rec}.
We now prove the inclusions $\Oh\subseteq\phi(\Nt)$ and $\phi(\Nt)\subseteq\Oh$.

($\subseteq$) Let $v\in\Oh$, that is, $\Cz\Ab^{k}v=0$ and $\Kfull\Ab^{k}v=0$ for every $k\ge0$.
At $k=0$ these are $\Clim v_p=0$ and $K_Pv_p+K_Iv_e=0$, the latter being $v=\phi(v_p)$.
Since $\Kfull\Ab^{k}v=0$ for every $k\ge0$, \arxref{eq:rec} applies for every $k$ and yields, for $k\ge1$,
\begin{gather*}
\Cz\Ab^{k}v=\Clim A_p^{k}v_p=0,\\
\Kfull\Ab^{k}v=(K_PA_p{+}K_I\Creg)A_p^{k-1}v_p=0 .
\end{gather*}
Together with $\Clim v_p=0$, these yield $C_NA_p^{k}v_p=0$ for every $k\ge0$, that is, $v_p\in\Nt$.
Hence $\Oh\subseteq\phi(\Nt)$.

($\supseteq$) Let $w\in\Nt$ and $v=\phi(w)$, that is, $v_p=w$ and $v_e=-K_I^{-1}K_Pw$.
We show by induction on $k$ that $\Kfull\Ab^{k}v=0$ for every $k\ge0$.
For $k=0$, $\Kfull v=K_Pw-K_IK_I^{-1}K_Pw=0$.
If $\Kfull\Ab^{i}v=0$ for $i=0,\dots,k$, then \arxref{eq:rec} with $j=k+1$ yields
\begin{gather*}
\Ab^{k+1}v=\begin{bmatrix}A_p^{k+1}w\\ \Creg A_p^{k}w\end{bmatrix},\\
\Kfull\Ab^{k+1}v=(K_PA_p{+}K_I\Creg)A_p^{k}w=0 ,
\end{gather*}
the last equality by $w\in\Nt$.
Then \arxref{eq:rec} and $w\in\Nt$ yield $\Cz\Ab^{k}v=\Clim A_p^{k}w=0$ for every $k\ge0$.
Hence $\Cbar\Ab^{k}v=0$ for every $k\ge0$, that is, $v\in\Oh$, and $\phi(\Nt)\subseteq\Oh$.

\emph{Step 3.} By \arxref{eq:kerQphi}, \arxref{eq:Ophi} and the injectivity of $\phi$, $\ker Q=\Oh$ if and only if $\operatorname{span}V=\Nt$.
Every $w\in\Nt$ satisfies $\Clim w=0$ and $\Clim A_pw=0$ by the definition of $C_N$, that is, $\Nt\subseteq\operatorname{span}V$, and $\operatorname{span}V=\Nt$ reduces to $\operatorname{span}V\subseteq\Nt$.
We will show that $\operatorname{span}V\subseteq\Nt$ is equivalent to
\begin{gather*}
\operatorname{span}(A_pV)\subseteq\operatorname{span}V\quad\text{and}\\
(K_PA_p{+}K_I\Creg)w=0\ \ \forall w\in\operatorname{span}V .
\end{gather*}
($\Rightarrow$) Let $\operatorname{span}V\subseteq\Nt$ and $w\in\operatorname{span}V$, which implies $w\in\Nt$.
The condition defining $\Nt$ at the index $k=0$ yields $(K_PA_p{+}K_I\Creg)w=0$.
At the index $k+1$, it yields $C_NA_p^{k}(A_pw)=0$ for every $k\ge0$, that is, $A_pw\in\Nt\subseteq\operatorname{span}V$.

($\Leftarrow$) Suppose that $\operatorname{span}(A_pV)\subseteq\operatorname{span}V$ and $(K_PA_p{+}K_I\Creg)w=0$ for every $w\in\operatorname{span}V$.
Then $A_p^{k}w\in\operatorname{span}V$ for every $w\in\operatorname{span}V$ and every $k\ge0$; hence $C_NA_p^{k}w=0$, that is, $w\in\Nt$.

Since $\operatorname{span}V=\ker\begin{bsmallmatrix}\Clim\\ \Clim A_p\end{bsmallmatrix}$ and $\Clim A_pV=0$,
\begin{gather*}
\operatorname{span}(A_pV)\subseteq\operatorname{span}V\\
\Leftrightarrow\begin{bmatrix}\Clim\\ \Clim A_p\end{bmatrix}A_pV=0\Leftrightarrow\Clim A_p^{2}V=0,\\
(K_PA_p{+}K_I\Creg)w=0\ \ \forall w\in\operatorname{span}V\\
\iff(K_PA_p{+}K_I\Creg)V=0 ,
\end{gather*}
which together prove \cref{prop:kerQO}.
\end{proof}

\subsection{Proof of \texorpdfstring{\cref{lem:normalform}}{Lemma \ref{lem:normalform}}}

\begin{proof}
\emph{(a)} By \textnormal{\ref{ass:A1}}, $\Hu=\Clim B_p$ is nonsingular, which implies $\operatorname{rank}B_p=m$.
Since $\operatorname{span}W_p\T=\ker(B_p\T)$ has dimension $n_p-m$, $W_p\T$ has full column rank, and therefore $W_pB_p=0$ and $W_pW_p\T$ is nonsingular.
Moreover, $M:=[B_p\ \ W_p\T]$ is nonsingular, since $\operatorname{span}B_p$ and $\operatorname{span}W_p\T=\ker(B_p\T)$ are orthogonal complements in $\R^{n_p}$.
Stacking $\Clim$ and $W_p$ in $T$ yields
\[
TM=\begin{bmatrix}\Clim B_p&\Clim W_p\T\\ W_pB_p&W_pW_p\T\end{bmatrix}=\begin{bmatrix}\Hu&\Clim W_p\T\\ 0&W_pW_p\T\end{bmatrix} ,
\]
which is block upper triangular with nonsingular diagonal blocks.
Hence $TM$, and therefore $T$, is nonsingular.

\emph{(b)} By the partition $T^{-1}=:[R_1\ \ R_2]$ as in (b),
\[
TT^{-1}=\begin{bmatrix}\Clim\\ W_p\end{bmatrix}[R_1\ \ R_2]
=\begin{bmatrix}\Clim R_1&\Clim R_2\\ W_pR_1&W_pR_2\end{bmatrix}
=I_{n_p} .
\]
Differentiating $\etat=W_px_p$ along \eqref{eq:plant} and using $W_pB_p=0$,
\[
\dot\etat=W_p(A_px_p+B_pu)=W_pA_px_p .
\]
Since $Tx_p=(z,\etat)$ and $T^{-1}=[R_1\ \ R_2]$, $x_p=R_1z+R_2\etat$, and substituting it yields $\dot\etat=W_pA_pR_1z+W_pA_pR_2\etat=B_\eta z+A_\eta\etat$.

\emph{(c)} By \textnormal{\ref{ass:A2}}, the invariant zeros of $(A_p,B_p,\Clim)$ \cite{Isidori95} lie in $\mathbb C^{-}$.
They are the roots of $\det P(\lambda)$, where
\[
P(\lambda):=\begin{bmatrix}A_p-\lambda I& B_p\\ \Clim& 0\end{bmatrix} .
\]
Since $\det\operatorname{diag}(T,I_m)\cdot\det\operatorname{diag}(T^{-1},I_m)=1$, the matrix $\tilde P(\lambda):=\operatorname{diag}(T,I_m)\,P(\lambda)\,\operatorname{diag}(T^{-1},I_m)$, which is $P(\lambda)$ written in the coordinates $(z,\etat)$ of (b), satisfies $\det\tilde P(\lambda)=\det P(\lambda)$, and substituting (b) yields
\[
\tilde P(\lambda)=\begin{bmatrix}\ast&\ast&\Hu\\ B_\eta&A_\eta-\lambda I&0\\ I_m&0&0\end{bmatrix} ,
\]
whose determinant is $\pm\det(\Hu)\det(A_\eta-\lambda I)$.
The invariant zeros are therefore the eigenvalues of $A_\eta$, and $A_\eta$ is Hurwitz.

\emph{(d)} $A_\eta$ is nonsingular by (c), and $\Bd$ is well defined.
Differentiating $\eta=\etat+\Bd z$ and using (b) with $A_\eta\Bd=B_\eta$ yields \eqref{eq:etadyn}.

\emph{(e)} Since $\Tz\in\R^{(n_p+m)\times(n_p+m)}$ is square, it is nonsingular if and only if $\ker\Tz=\{0\}$; as $0\in\ker\Tz$ always, it suffices to show that $\Tz v=0$ implies $v=0$.
Let $v=\begin{bsmallmatrix}v_p\\ v_e\end{bsmallmatrix}$ with $v_p\in\R^{n_p}$ and $v_e\in\R^{m}$, and suppose that $\Tz v=0$.
Then
\[
\Cz v=0,\qquad \Cz\Ab v=0,\qquad [\,\Wt\ \ 0\,]\,v=0 .
\]
The first and third equations yield $\Clim v_p=0$ and $\Wt v_p=0$, and hence $W_pv_p=0$, that is, $Tv_p=0$, which implies $v_p=0$ by (a).
The second equation then yields $0=\Cz\Ab\begin{bsmallmatrix}0\\ v_e\end{bsmallmatrix}=-\Hu K_Iv_e$, which implies $v_e=0$ by \textnormal{\ref{ass:A1}} and \textnormal{\ref{ass:A3}}.
\end{proof}

\subsection{Proof that $E$ of Section~\ref{sec:rpi} is RPI}

\begin{proof}
For $s\ge0$ and piecewise continuous $w:[0,s]\to\W$, let $\Phi_s^{w}:\R^{n_p-m}\to\R^{n_p-m}$ denote the solution map of \eqref{eq:etadyn} over $[0,s]$ under the input $w$,
\[
\Phi_s^{w}(\eta):=e^{A_\eta s}\eta+\int_0^{s}e^{A_\eta(s-\tau)}\Bd\,w(\tau)\,d\tau .
\]
By the construction of $E$ in \arxref{eq:Eset}, $E=\operatorname{cl}\{\operatorname{conv}\{R\}\}$, where
\[
R:=\{\Phi_{r}^{w}(\eta(t_0)):\ r\ge0,\ w:[0,r]\to\W\} ,
\]
with $w$ piecewise continuous, is the set of states that \eqref{eq:etadyn} reaches from $\eta(t_0)$.

\emph{Step 1.} Since $A_\eta$ is Hurwitz by \cref{lem:normalform}\ref{lem:nf-c}, $\lVert e^{A_\eta t}\rVert\le c\,e^{-\sigma t}$ for some $c,\sigma>0$ and $\lVert e^{A_\eta t}\eta(t_0)\rVert\le c\lVert\eta(t_0)\rVert$; the integral is bounded because $\W$ is bounded.
Hence $R$ is bounded, as are $\operatorname{conv}\{R\}$ and its closure, which is closed by construction and convex as the closure of a convex set.
Thus $E$ is compact and convex.
Taking $w\equiv0$ shows that $e^{A_\eta r}\eta(t_0)\in R$ for every $r\ge0$; since this point tends to $0$ and $E$ is closed, $0\in E$.

\emph{Step 2.} We show that, for every $s\ge0$ and piecewise continuous $w_2:[0,s]\to\W$,
\begin{equation}\label{eq:PhiR}
\Phi_s^{w_2}(R)\subseteq R .
\end{equation}
Let $\eta\in R$, that is, $\eta=\Phi_{r}^{w_1}(\eta(t_0))$ for some $r\ge0$ and piecewise continuous $w_1:[0,r]\to\W$.
Define $w_3:[0,r+s]\to\W$ by $w_3(\tau):=w_1(\tau)$ for $\tau\in[0,r]$ and $w_3(\tau):=w_2(\tau-r)$ for $\tau\in(r,r+s]$, which is piecewise continuous.
Splitting the integral in $\Phi_{r+s}^{w_3}$ at $\tau=r$ yields
\[
\Phi_{r+s}^{w_3}(\eta(t_0))=\Phi_s^{w_2}\big(\Phi_{r}^{w_1}(\eta(t_0))\big)=\Phi_s^{w_2}(\eta) ,
\]
and hence $\Phi_s^{w_2}(\eta)\in R$.

\emph{Step 3.} From the property of the convex hull, every $p\in\operatorname{conv}\{R\}$ is of the form $p=\sum_{j=1}^{J}\theta_j\eta_j$ with $J\ge1$, $\eta_j\in R$, $\theta_j\ge0$ and $\sum_{j=1}^{J}\theta_j=1$.
Since $\Phi_s^{w_2}(\eta)=e^{A_\eta s}\eta+\Phi_s^{w_2}(0)$ is affine in $\eta$ and $\sum_{j=1}^{J}\theta_j=1$,
\[
\Phi_s^{w_2}\Big(\sum_{j=1}^{J}\theta_j\eta_j\Big)=\sum_{j=1}^{J}\theta_je^{A_\eta s}\eta_j+\Phi_s^{w_2}(0)=\sum_{j=1}^{J}\theta_j\Phi_s^{w_2}(\eta_j) ,
\]
which belongs to $\operatorname{conv}\{R\}$ by \arxref{eq:PhiR}.
Since $\Phi_s^{w_2}$ is continuous, $\Phi_s^{w_2}(E)\subseteq\operatorname{cl}\{\Phi_s^{w_2}(\operatorname{conv}\{R\})\}\subseteq E$, which, together with Step 1, proves that $E$ satisfies \cref{def:rpi}.

\emph{Step 4.} Each integral in \eqref{eq:emin}, written with an input $w_1:[0,t]\to\W$, equals $\Phi_t^{w_2}(0)$ with $w_2(\tau):=w_1(t-\tau)$.
For every $T\ge0$, $e^{A_\eta T}\eta(t_0)\in R$ and \arxref{eq:PhiR} yield $\Phi_t^{w_2}(e^{A_\eta T}\eta(t_0))=e^{A_\eta(t+T)}\eta(t_0)+\Phi_t^{w_2}(0)\in R$, which tends to $\Phi_t^{w_2}(0)$ as $T\to\infty$.
Since $E$ is closed, $\Emin\subseteq E$.
\end{proof}

\subsection{Proof of \texorpdfstring{\cref{prop:fix4nec}}{Proposition \ref{prop:fix4nec}}}

\begin{proof}
\emph{(a)} For all $u\in\U$, $\dot z=az+u\ge az-\umax$, that is, $\dot y\ge ay$ with $y:=z-\umax/a$.
The comparison lemma \cite[Lemma~3.4]{Khalil02} applied to this inequality yields
\[
z(t)-\cfrac{\umax}{a}\ \ge\ \Big(z(t_0)-\cfrac{\umax}{a}\Big)e^{a(t-t_0)} ,
\]
which is the bound in (a).
Since $a>0$ and $z(t_0)-\umax/a>0$, the lower bound equals $\zmax$ at
\[
t_1:=t_0+\cfrac{1}{a}\ln\cfrac{\zmax-\umax/a}{z(t_0)-\umax/a}\ \ge\ t_0
\]
and exceeds $\zmax$ for $t>t_1$; hence $z(t)>\zmax$ for $t>t_1$, and the output box is violated in finite time.

\emph{(b)} \textnormal{\ref{ass:A1}} is satisfied with $\Hu=1$, \textnormal{\ref{ass:A2}} is vacuous ($n_p=m$, no internal dynamics), and \textnormal{\ref{ass:A3}} implies $K_I\neq0$; $\ker\begin{bsmallmatrix}1\\ a\end{bsmallmatrix}=\{0\}$, which implies $d=0$ and $\SE=\Omega$.
In this setting, $\dot z=az+u$, and $h_1^{\pm}$, $\psi_1^{\pm}$ and $h_u^{\pm}$ depend on $\xb$ only through $z=\Cz\xb$ and $u=-\Kfull\xb$.

\emph{Step 1.} Let
\[
z^\ast:=\cfrac{\alpha_1\zmax+\umax}{a+\alpha_1} ,
\]
and let $\xb^\ast$ be the unique solution of $\Cz\xb^\ast=z^\ast$ and $-\Kfull\xb^\ast=-\umax$, which exists since $\begin{bsmallmatrix}\Cz\\ \Kfull\end{bsmallmatrix}=\begin{bsmallmatrix}1&0\\ K_P&K_I\end{bsmallmatrix}$ is nonsingular for $K_I\neq0$.
At $\xb^\ast$, $\dot z=az^\ast-\umax=\alpha_1(\zmax-z^\ast)$, which yields $\psi_1^{+}(\xb^\ast)=h_u^{-}(\xb^\ast)=0$, while $h_1^{+}(\xb^\ast)=(a\zmax-\umax)/(a+\alpha_1)>0$ and $h_1^{-},\psi_1^{-},h_u^{+}>0$ at $\xb^\ast$.
Therefore $\xb^\ast\in\Omega=\SE$ for every $\alpha_1>0$.

\emph{Step 2.} Let $\lambda:=(1;0;0;1;0;0)\in\R^{6}$, that is, $\lambda_{z+}=\lambda_{u-}=1$ and all other blocks of $\lambda$ vanish.
Since $\Hu=1$, \eqref{eq:qpdata} yields $A_\zeta\T\lambda=K_I-K_I=0$, and $\lambda/\sqrt2$ is an extreme ray of $\Lambda$ by \arxref{eq:extray}, since $\dim\ker(A_S\T)=1$ for its support $S$.
Hence $\lambda/\sqrt2=\lambda^{k}$ for some $k\in\{1,\dots,N\}$, and $\sigma_k=\lambda^{k\top}\eM=0$.

\emph{Step 3.} At $\xb^\ast$, $\lambda\T\Psi=\psi_1^{+}=0$ and $\lambda\T H=h_u^{-}=0$, and therefore $\lambda\T b=\lambda\T\hat b$ for every $(\alpha_2,\alpha_u,\ag)$.
By \eqref{eq:bsplit}, $\lambda\T\hat b(\xb)=-\Cz(\Ab+\alpha_1I)\Ab\xb-\Kfull\Ab\xb$.
Since $\dot z=az+u$ with $u=-\Kfull\xb$ and $\Cz\By=0$, $\Cz\Ab=a\Cz-\Kfull$, and hence $\Cz\Ab^{2}\xb=a\,\Cz\Ab\xb-\Kfull\Ab\xb$.
Substituting it yields
\[
\lambda\T\hat b(\xb)=-(a+\alpha_1)\,\Cz\Ab\xb ,
\]
which at $\xb^\ast$ equals $-(a+\alpha_1)\,\alpha_1(\zmax-z^\ast)=\alpha_1(\umax-a\zmax)$ by Step 1.
Hence, for every $\alpha^{1}\in\R_{++}^{4}$,
\begin{align*}
\min_k\nu_k(\SE)\le\nu_k&=\cfrac{1}{\sqrt2}\inf_{\SE}\lambda\T b\le\cfrac{1}{\sqrt2}\lambda\T b(\xb^\ast)\\
&=\cfrac{\alpha_1}{\sqrt2}(\umax-a\zmax)<0 ,
\end{align*}
which proves (b).
\end{proof}

\subsection{Proof of \texorpdfstring{\cref{prop:ceiling}}{Proposition \ref{prop:ceiling}}}

\begin{proof}
Fix $c>0$ and set $\alpha_u=c\,\alpha_2$.
Let $s_k:=\lambda^{k\top}(\Psi+cH)$, $\Kp:=\{k:\ \sigma_k>0\}$ and $\ThB_k:=\inf_{\xb\in Z_k}\lambda^{k\top}\hat b(\xb)$ with $\inf\varnothing:=+\infty$, so that $\ThB=\min_{k\in\Kz}\ThB_k$ by \eqref{eq:ThB}.
The set $\SE(\varrho)$ is compact by \cref{thm:mainSE}, and none of $\SE(\varrho)$, $\hat b$, $\Psi$ and $H$ depends on $(\alpha_2,\alpha_u,\ag)$, since $\SE(\varrho)$ is built from $h_1^{\pm}$, $\psi_1^{\pm}$, $h_u^{\pm}$ and $\eta$ (see \eqref{eq:omega} and \cref{def:rpi}).
Since $\Psi,H\ge0$ on $\SE(\varrho)\subseteq\Omega$ and $\lambda^k\ge0$, $s_k\ge0$ on $\SE(\varrho)$.
Each $Z_k$ is a closed subset of $\SE(\varrho)$ and $\lambda^{k\top}\hat b$ is continuous, so $\ThB_k$ is attained whenever $Z_k\neq\varnothing$, and $\ThB\in\R\cup\{+\infty\}$.
Substituting \eqref{eq:bsplit} into the definition of $\nu_k$ yields
\[
\nu_k(\SE(\varrho))=\inf_{\xb\in\SE(\varrho)}\big(\lambda^{k\top}\hat b(\xb)+\alpha_2\,s_k(\xb)\big)+\sigma_k\,\ag .
\]

We first show that, for every real $\varepsilon<\ThB$, $\min_k\nu_k(\SE(\varrho))\ge\varepsilon$ if and only if $\alpha_2\ge\underline\alpha_2$ and $\ag\ge\underline\alpha_g(\alpha_2)$, with
\begin{gather}
\underline\alpha_2:=\max_{k\in\Kz}\ \sup_{\substack{\xb\in\SE(\varrho)\\ \lambda^{k\top}\hat b(\xb)<\varepsilon}}\cfrac{\varepsilon-\lambda^{k\top}\hat b(\xb)}{s_k(\xb)},\label{eq:a2min}\\
\underline\alpha_g(\alpha_2):=\max_{k\in\Kp}\sigma_k^{-1}\big(\varepsilon-\inf_{\SE(\varrho)}(\lambda^{k\top}\hat b+\alpha_2 s_k)\big),\label{eq:agmin}
\end{gather}
where $\sup\varnothing:=0$.
The function $\underline\alpha_g$ is nonincreasing, since $s_k\ge0$ makes the infimum in \eqref{eq:agmin} nondecreasing in $\alpha_2$.

\emph{Rays in $\Kz$.} Here $\sigma_k=0$.
Let $D_k:=\{\xb\in\SE(\varrho):\ \lambda^{k\top}\hat b(\xb)\le\varepsilon\}$.
If $s_k(\xb)=0$ for some $\xb\in\SE(\varrho)$, then $\lambda^{k\top}\Psi(\xb)=\lambda^{k\top}H(\xb)=0$, since both terms of $s_k$ are nonnegative and $c>0$; hence $\xb\in Z_k$ and $\lambda^{k\top}\hat b(\xb)\ge\ThB_k\ge\ThB>\varepsilon$, that is, $\xb\notin D_k$.
Therefore $s_k>0$ on $D_k$.
If $D_k\neq\varnothing$, then $D_k$ is compact, $\underline s_k:=\min_{D_k}s_k>0$, and every quotient in \eqref{eq:a2min} is at most $(\varepsilon-\min_{\SE(\varrho)}\lambda^{k\top}\hat b)/\underline s_k$; hence $\underline\alpha_2$ is finite.
Since $\alpha_2>0$ and $s_k\ge0$, the inequality $\lambda^{k\top}\hat b(\xb)+\alpha_2 s_k(\xb)\ge\varepsilon$ is satisfied at every $\xb$ with $\lambda^{k\top}\hat b(\xb)\ge\varepsilon$, and at every $\xb$ with $\lambda^{k\top}\hat b(\xb)<\varepsilon$ it is equivalent to $\alpha_2\ge(\varepsilon-\lambda^{k\top}\hat b(\xb))/s_k(\xb)$.
Therefore $\nu_k(\SE(\varrho))\ge\varepsilon$ if and only if $\alpha_2$ is at least the inner supremum in \eqref{eq:a2min}, and $\min_{k\in\Kz}\nu_k(\SE(\varrho))\ge\varepsilon$ if and only if $\alpha_2\ge\underline\alpha_2$.
The strict inequality $\varepsilon<\ThB$ is required, since otherwise $s_k$ may vanish on $D_k$ and $\underline\alpha_2$ may be infinite.

\emph{Rays in $\Kp$.} Here $\sigma_k>0$, and the infimum in \eqref{eq:agmin} is finite, since $\lambda^{k\top}\hat b+\alpha_2 s_k$ is continuous on the compact set $\SE(\varrho)$.
Hence $\nu_k(\SE(\varrho))\ge\varepsilon$ if and only if $\ag\ge\big(\varepsilon-\inf_{\SE(\varrho)}(\lambda^{k\top}\hat b+\alpha_2 s_k)\big)/\sigma_k$, and the maximum over $k\in\Kp$ yields $\ag\ge\underline\alpha_g(\alpha_2)$.
Since $\min_k\nu_k(\SE(\varrho))=\min\{\min_{k\in\Kz}\nu_k(\SE(\varrho)),\ \min_{k\in\Kp}\nu_k(\SE(\varrho))\}$, the two cases prove the equivalence.

\emph{(a)} We show the two inequalities between the supremum in (a) and $\ThB$.
Let $\alpha_2,\alpha_u,\ag>0$ be arbitrary; here $\alpha_u$ need not equal $c\,\alpha_2$.
For $k\in\Kz$ and $\xb\in Z_k$, \eqref{eq:bsplit} and $\sigma_k=0$ yield $\lambda^{k\top}b(\xb)=\lambda^{k\top}\hat b(\xb)$, and therefore $\nu_k(\SE(\varrho))\le\lambda^{k\top}\hat b(\xb)$.
Taking the infimum over $\xb\in Z_k$ yields $\nu_k(\SE(\varrho))\le\ThB_k$, which is trivial if $Z_k=\varnothing$.
Hence $\min_k\nu_k(\SE(\varrho))\le\min_{k\in\Kz}\ThB_k=\ThB$ for every $(\alpha_2,\alpha_u,\ag)$, and the supremum in (a) is at most $\ThB$.

Conversely, let $\varepsilon$ be a real number with $\varepsilon<\ThB$.
Choose $\alpha_2>0$ with $\alpha_2\ge\underline\alpha_2$ and $\ag>0$ with $\ag\ge\underline\alpha_g(\alpha_2)$, both finite by the equivalence, and $\alpha_u=c\,\alpha_2>0$.
By the equivalence, $\min_k\nu_k(\SE(\varrho))\ge\varepsilon$ at this point of the orthant, so the supremum in (a) is at least $\varepsilon$.
Since $\varepsilon<\ThB$ is arbitrary, the supremum is at least $\ThB$, and it equals $+\infty$ if $\ThB=+\infty$.
The two inequalities prove (a).

\emph{(b)} Let $\alpha_1\ge\alpha_1^{\min}$, and let $Z_k(\alpha_1)$ and $E(\alpha_1)$ denote $Z_k$ and $E$ at $\alpha_1$, with $E(\alpha_1)$ computed by \arxref{eq:Eset}.
We first show that $H_k\subseteq Z_k(\alpha_1)$ for every $k\in\Kz$.
Let $\xb\in H_k$.
Since $\Cz\Ab\xb=0$, \eqref{eq:psi1} yields $\psi_1^{\pm}(\xb)=\alpha_1h_1^{\pm}(\xb)$, and the conditions $\psi_1^{\pm}(\xb)\ge0$ and $\lambda^{k\top}\Psi(\xb)=0$ become $h_1^{\pm}(\xb)\ge0$ and $\lambda^{k\top}_{z+}h_1^{+}(\xb)+\lambda^{k\top}_{z-}h_1^{-}(\xb)=0$.
These, together with $h_u^{\pm}(\xb)\ge0$ and $\lambda^{k\top}H(\xb)=0$, are independent of $\alpha_1$.
Since $\W$ grows with $\alpha_1$, \arxref{eq:Eset} yields $E(\alpha_1^{\min})\subseteq E(\alpha_1)$, which contains $\eta(\xb)$.
Thus $\xb\in Z_k(\alpha_1)$.
Next, by \eqref{eq:bsplit}, $\hat b$ depends on $\alpha_1$ only through $\mp\alpha_1\Cz\Ab\xb$, which vanishes on $H_k$.
Therefore, by \eqref{eq:ThB} and the inclusion, $\ThB(\alpha_1,\varrho)\le\min_{k\in\Kz}\inf_{\xb\in H_k}\lambda^{k\top}\hat b(\xb)=\ThH(\alpha_1^{\min},\varrho)$.

\emph{(c)} For $k\in\Kz$ let $I_k^{\pm}:=\{i:(\lambda^k_{z\pm})_i>0\}$ and $J_k^{\pm}:=\{j:(\lambda^k_{u\pm})_j>0\}$, and let $k$ satisfy $I_k^{+}\cup I_k^{-}\neq\varnothing$, $J_k^{+}\cup J_k^{-}\neq\varnothing$, $I_k^{+}\cap I_k^{-}=\varnothing$ and $J_k^{+}\cap J_k^{-}=\varnothing$ (otherwise $Z_k=\varnothing$).
Define
\begin{equation}\label{eq:zdeg}
\begin{gathered}
z^\circ_i=\begin{cases}(\zmin)_i&i\in I_k^{+}\\ (\zmax)_i&i\in I_k^{-}\\ \tfrac12((\zmax)_i+(\zmin)_i)&\text{else}\end{cases},\\[2pt]
\dot z^\circ_i=\begin{cases}(w_z)_i&i\in I_k^{+}\\ -(w_z)_i&i\in I_k^{-}\\ 0&\text{else}\end{cases},\\[2pt]
u^\circ_j=\begin{cases}\varrho(\umax)_j&j\in J_k^{+}\\ \varrho(\umin)_j&j\in J_k^{-}\end{cases},
\end{gathered}
\end{equation}
let $M_k:=\big[\Cz\T\ \ (\Cz\Ab)\T\ \ {-(\Kfull)_{J_k}\T}\big]\T$ and let $x_0,x_1$ be minimum-norm solutions of
\begin{equation}\label{eq:x01}
M_kx_0=(z^\circ;0;u^\circ),\qquad M_kx_1=(0;\dot z^\circ;0),
\end{equation}
where $(\Kfull)_{J_k}$ collects the rows $j\in J_k^{+}\cup J_k^{-}$ of $\Kfull$, and, with $d_z:=\lambda^k_{z-}-\lambda^k_{z+}$ and $d_u:=\lambda^k_{u+}-\lambda^k_{u-}$, set
\begin{equation}\label{eq:qcoef}
q_2:=d_z\T\dot z^\circ,\qquad q_l:=d_z\T\Cz\Ab^2x_l+d_u\T\Kfull\Ab\,x_l,
\end{equation}
for $l\in\{0,1\}$, and
\begin{equation}\label{eq:Aset}
A:=\{\alpha_1>0:\ x_0+\alpha_1x_1\in\SE(\varrho)\}.
\end{equation}
Solutions of \eqref{eq:x01} exist generically when $|J_k^{+}\cup J_k^{-}|\le n_p-m$, the input equations being that many affine conditions on the coordinate $\eta$ of \cref{lem:normalform}\ref{lem:nf-e}; $A$ is an interval, each constraint of $\SE(\varrho)$ being affine in $\alpha_1$ along $x_0+\alpha_1x_1$; any admissible $k$ may be used in \eqref{eq:a1bar}, and the smallest resulting $\bar\alpha_1$ is taken.
Let $k\in\Kz$, $\alpha_1>0$ and $\xb\in Z_k(\alpha_1)$.
On $\SE(\varrho)$ every summand of $\lambda^{k\top}\Psi$ is nonnegative, so $\lambda^{k\top}\Psi(\xb)=0$ gives $\psi^{+}_{1,i}(\xb)=0\ \forall i\in I_k^{+}$ and $\psi^{-}_{1,i}(\xb)=0\ \forall i\in I_k^{-}$, that is, $\dot z_i=\alpha_1h^{+}_{1,i}(\xb)\ \forall i\in I_k^{+}$ and $\dot z_i=-\alpha_1h^{-}_{1,i}(\xb)\ \forall i\in I_k^{-}$.
Substituting these into \eqref{eq:bsplit},
\begin{equation}\label{eq:bhatZk}
\begin{split}
\lambda^{k\top}\hat b(\xb)=&-\alpha_1^2\Big[(\lambda^k_{z+})\T h_1^{+}(\xb)+(\lambda^k_{z-})\T h_1^{-}(\xb)\Big]\\
&+(\lambda^k_{z-}-\lambda^k_{z+})\T\Cz\Ab^2\xb\\
&+(\lambda^k_{u+}-\lambda^k_{u-})\T\Kfull\Ab\xb
\end{split}
\end{equation}
$\forall\xb\in Z_k(\alpha_1)$, and by \eqref{eq:ThB}, $\ThB(\alpha_1,\varrho)\le\lambda^{k\top}\hat b(\xb)\ \forall k\in\Kz,\ \forall\xb\in Z_k(\alpha_1)$.
Let $k$ be admissible in \eqref{eq:zdeg}--\eqref{eq:x01}, let $\alpha_1\in A$, and set $\xb:=x_0+\alpha_1x_1$.
By \eqref{eq:x01}, $\Cz\xb=z^\circ$, $\Cz\Ab\xb=\alpha_1\dot z^\circ$ and $u_j(\xb)=u^\circ_j\ \forall j\in J_k^{+}\cup J_k^{-}$, so $\psi^{\pm}_{1,i}(\xb)=0\ \forall i\in I_k^{\pm}$ and $h^{\pm}_{u,j}(\xb)=0\ \forall j\in J_k^{\pm}$; with $\xb\in\SE(\varrho)$ by \eqref{eq:Aset}, $\xb\in Z_k(\alpha_1)$.
In \eqref{eq:bhatZk}, $h^{+}_{1,i}(\xb)=(w_z)_i\ \forall i\in I_k^{+}$, $h^{-}_{1,i}(\xb)=(w_z)_i\ \forall i\in I_k^{-}$ and $\xb=x_0+\alpha_1x_1$, so the right-hand side equals $q_2\alpha_1^2+q_1\alpha_1+q_0$ of \eqref{eq:qcoef}, with $q_2=-\sum_{i\in I_k^{+}}(\lambda^k_{z+})_i(w_z)_i-\sum_{i\in I_k^{-}}(\lambda^k_{z-})_i(w_z)_i<0$.
Therefore, for each admissible $k$, the quadratic $q$ with the coefficients \eqref{eq:qcoef} and the set $A$ in \eqref{eq:Aset} satisfy \eqref{eq:qbound} with $q_2<0$; hence $q$ and $A$ of \eqref{eq:qdef} and \eqref{eq:qbound} exist.

\emph{(d)} Since $\ThB>0$ by hypothesis, $\varepsilon=0$ satisfies $\varepsilon<\ThB$, and $\min_k\nu_k(\SE(\varrho))\ge0$ is the SIOCF condition \eqref{eq:siocfcond} on $\SE(\varrho)$.
The equivalence with $\varepsilon=0$ proves (d).
\end{proof}

\end{document}